\pgfplotsset{compat=1.7}
\pgfplotsset{compat=newest}
\setlist[enumerate,1]{label={\upshape(\roman*)}}
\theoremstyle{plain}
\newtheorem*{thm*}{Theorem}
\newtheorem{lem}{Lemma}
\newtheorem*{lem*}{Lemma}
\newtheorem{prop}{Proposition}
\newtheorem*{prop*}{Proposition}
\newtheorem*{asmp*}{Assumption}
\theoremstyle{definition}
\newtheorem*{defn*}{Definition}
\numberwithin{equation}{section}
\numberwithin{thm}{section}
\numberwithin{prop}{section}
\numberwithin{lem}{section}
\newcommand{\cN}{\mathcal{N}}
\newcommand{\cV}{\mathcal{V}}
\newcommand{\A}{\mathrm{A}}
\newcommand{\E}{\mathrm{E}}
\newcommand{\R}{\mathbb{R}}
\newcommand{\N}{\mathbb{N}}
\newcommand{\ubar}[1]{\text{\b{$#1$}}}
\newcommand\argmax{\operatornamewithlimits{arg\,max}}
\renewcommand\footnotemark{}
\begin{document}
	
	\title{Optimal taxation and the Domar-Musgrave effect}
	
	\author[1]{Brendan K.\ Beare}
	\author[2]{Alexis Akira Toda}\thanks{We thank John Coleman, as well as seminar participants at Duke University, Georgetown University, Pompeu Fabra University, the University of Connecticut, the University of Oxford and the University of Sydney, for helpful comments.}
	\affil[1]{School of Economics, University of Sydney}
	\affil[2]{Department of Economics, Emory University}
	
	\maketitle
	
	\begin{center}
		Accepted for publication in \textit{Economic Inquiry}.
	\end{center}
	
	\medskip
	
	\begin{abstract}
		This article concerns the optimal choice of flat taxes on labor and capital income, and on consumption, in a tractable economic model in which agents are subject to idiosyncratic investment risk. We identify the tax rates which maximize welfare in stationary equilibrium while preserving tax revenue, finding that an increase in welfare equivalent to a permanent increase in consumption of nearly 7\% can be achieved by only taxing capital income and consumption. The Domar-Musgrave effect explains cases where it is optimal to tax capital income. We characterize the dynamic response to the substitution of consumption taxation for labor income taxation.
	\end{abstract}
	
	\onehalfspacing
	
	\section{Introduction}\label{sec:intro}
	
	This article concerns the optimal choice of tax rates in an economy in which the primary source of heterogeneity is idiosyncratic variation in the return to capital. We develop a tractable model in which there are three forms of taxation: labor and capital income taxes, and a consumption tax. Each of the three taxes is applied at a flat rate which does not vary over time. The labor and capital income tax rates are between zero and one, and the consumption tax rate is nonnegative. Our model is a heterogeneous agent economy of the general type introduced in \cite{Bewley1986}, with idiosyncratic variation in the return to capital obtained by endowing agents with a Markov-switching entrepreneurial ability state. Agents supply labor inelastically, and may choose to engage in entrepreneurship by hiring labor to operate physical capital. They manage a portfolio of risk-free bonds and physical capital in the face of uncertainty about their lifespan and future returns to investment. The risk-free interest rate and wage are determined by market clearing conditions for the bond and labor markets. We characterize the combination of labor income, capital income and consumption tax rates which maximizes welfare in stationary equilibrium while generating a fixed level of revenue, finding that it is optimal to generate all revenue through the taxation of capital income and consumption. The optimal rate of capital income taxation is zero if entrepreneurs will fully leverage their physical capital in the absence of capital income taxation, or positive otherwise.
	
	A crucial feature of our model is that capital income taxation is applied with full offset provisions. This means that, just as a capital gain of \$10 is reduced to \$9 with a 10\% capital income tax, a capital loss of \$10 is also reduced to \$9. With full offset provisions and a positive expected return to capital investment, capital income taxation lowers the expected return but also reduces downside risk. The latter effect mitigates, and may even dominate, the extent to which the former effect discourages capital accumulation. The potential for capital income taxation to encourage capital investment in this fashion is known as the Domar-Musgrave effect, after \cite{DomarMusgrave1944}. Subsequent to this work, \cite{Tobin1958} provided a striking numerical example in which the introduction of a 50\% capital gains tax with full offset provisions leads an investor to double their investment in a risky asset. \cite{Mossin1968} and \cite{Stiglitz1969} extended the analysis in \cite{DomarMusgrave1944} from mean-variance preferences to general expected utility preferences. Recent articles showing the relevance of the Domar-Musgrave effect in real data include \cite{LangenmayrLester2018} and \cite{ArmstrongGlaeserHuangTaylor2019}.
	
	Our model closely resembles the one used in \cite{Panousi2010} to study the optimal choice of labor and capital income tax rates, itself a variation on the model introduced in \cite{Angeletos2007}. We provide a detailed analysis of our model going beyond what is provided in those articles. Specifically, we provide explicit formulae for the Mellin transform\footnote{The Mellin transform of the distribution of a positive random variable $X$ is $\E(X^z)$, viewed as a complex-valued function of a complex variable $z$.} of the stationary distribution of wealth, for the excess demand equations determining equilibrium prices, and for equilibrium aggregate tax revenue. We do this by applying new results in \cite{BeareToda2022} on a class of stochastic processes called \emph{Markov multiplicative processes with reset}. As in \cite{Panousi2010}, we model the creation and destruction of agents using the perpetual youth framework of \cite{Yaari1965} and \cite{Blanchard1985} in which agents perish with a fixed probability each period, at this time being replaced with a newborn agent. The random replacement of agents, combined with the multiplicative nature of investment returns, leads to the time path of the wealth of a succession of agents being a Markov multiplicative process with reset. We show, using the results in \cite{BeareToda2022}, that the wealth process admits a unique stationary distribution whose Mellin transform is given in closed form by model parameters. From this Mellin transform we obtain formulae for excess demands and equilibrium tax revenue, and can compute the stationary distribution of wealth by Fourier inversion.
	
	The formulae we derive facilitate a precise numerical analysis of our model illuminating previously unknown facets of the role played by the Domar-Musgrave effect. In our numerical analysis we assume the productivity of entrepreneurs to be distributed independently over time, while maintaining Markov switching between worker and entrepreneur types. The serial independence of entrepreneurial productivity leads all entrepreneurs to be affected by the natural borrowing constraint in the same way: either it binds on all of them (which we understand to mean that physical capital is fully leveraged) or it binds on none of them. The way in which the optimal tax rates relate to model parameters is qualitatively affected by whether the natural borrowing constraint for entrepreneurs is strictly binding, barely binding, or slack. By \emph{barely binding}, we mean that the amount that entrepreneurs would borrow in the absence of a borrowing constraint is exactly equal to the limit provided by the natural borrowing constraint. This is no mere knife-edge case, and obtains over a positive measure set of parameter values when tax rates are chosen optimally.
	
	Two parameters of key importance to the Domar-Musgrave effect are the risk aversion of agents and the volatility of entrepreneurial productivity. As risk aversion or volatility increases from a low level, with tax rates varying accordingly to maintain optimality, the natural borrowing constraint is initially strictly binding on entrepreneurs, then barely binding, then slack. The optimal rate of capital income taxation is zero while the borrowing constraint is strictly binding, then sharply increasing in risk aversion or volatility while the borrowing constraint is barely binding, then more gradually increasing while the borrowing constraint is slack. The sharp increase in the optimal rate of capital income taxation over the barely binding region confounds any attempt to provide a robust calculation of the optimal tax rate using our model. For instance, if we fix the risk aversion parameter equal to 3 as in \cite{Angeletos2007}, then the borrowing constraint is barely binding when volatility is between 0.198 and 0.215. Both values of volatility are in the empirically relevant range: volatility is 0.2 in the preferred calibration in \cite{Angeletos2007}, and is 0.247 in our preferred calibration. We find that the optimal capital income tax rate rises from zero to 18\% as volatility rises from 0.198 to 0.215. The sharp increase in the optimal capital income tax rate over this small range of volatilities can be viewed as a concerted effort by a social planner to maintain full leverage of physical capital through the Domar-Musgrave effect. We observe a similar phenomenon when varying the risk aversion parameter while holding volatility constant. The qualitatively different behavior of the optimal tax rates over the strictly binding, barely binding and slack regions of the risk aversion and volatility parameter spaces flows on to equilibrium prices (in our model, the interest rate and wage), which also behave differently in these three regions, sometimes in unexpected ways. For instance, holding risk aversion constant at 3 and varying volatility, the equilibrium wage is increasing in volatility while the borrowing constraint is strictly binding or slack, but is decreasing over the intermediate range of volatilities where the borrowing constraint is barely binding.
	
	While the optimal rate of capital income taxation in our model is highly sensitive to model parameters, the optimal rate of labor income taxation is not: it is zero. Intuitively, it is optimal to not tax labor income because taxing consumption is nondistortionary (i.e., it does not affect investment decisions) while also being more progressive than labor income taxation, as the latter fails to generate significant revenue from wealthy agents whose income is primarily derived from capital. The more surprising aspect of our analysis is that it may be optimal to tax capital income in addition to taxing consumption. Indeed, \cite{Coleman2000} is the only prior study we are aware of which finds that it may be optimal to tax capital income when choosing constant rates of taxation on labor income, capital income and consumption. There the optimal constant rates of labor and capital income taxation are zero and 2\% respectively.
	
	Perhaps the most restrictive aspect of our model is the lack of idiosyncratic variation in labor income. All labor earns a fixed wage; it is only capital income which is heterogeneous. This strong assumption preserves analytical tractability and is what allows us to use the results in \cite{BeareToda2022} to derive explicit formulae for excess demand functions and other quantities of interest. With idiosyncratic variation in labor income, the wealth process for a succession of agents would no longer be a Markov multiplicative process with reset, and so the results in \cite{BeareToda2022} would not apply. It is reasonable to wonder whether the optimality of zero labor income taxation in our model is entirely driven by the lack of labor income heterogeneity. Past literature indicates that this is not the case. In particular, \cite{Imrohoroglu1998} presents an analysis of the optimal choice of flat taxes on labor income, capital income and consumption in a model in which there is idiosyncratic variation in labor income but not in the return to capital. This model is, in a sense, the polar opposite of the model considered here. The optimal rate of labor income taxation in \cite{Imrohoroglu1998} is zero, just as it is in our model. On the other hand, the optimal rate of capital income taxation in \cite{Imrohoroglu1998} is also zero, whereas in our model it may be positive. The difference is explained by the Domar-Musgrave effect, which is absent in the model considered in \cite{Imrohoroglu1998} due to the lack of idiosyncratic variation in the return to capital.
	
	Using our preferred model calibration for the United States, we calculate the rates of capital income and consumption taxation which maximize welfare in stationary equilibrium while preserving current tax revenue to be 24\% and 31\% respectively. Eliminating the labor income tax and setting the rates of capital income and consumption taxation equal to these values is calculated to increase welfare in stationary equilibrium by same amount as a 6.6\% permanent increase in the consumption of all agents. The actual increase in aggregate consumption is 4.3\%, but the gain is skewed toward less wealthy agents, with the aggregate consumption of workers rising by 5.7\% and the aggregate consumption of entrepreneurs declining by 2.2\%. While the optimal rates of consumption taxation and particularly capital income taxation are sensitive to model parameters, the general conclusion that replacing the bulk of labor income tax revenue with consumption tax revenue generates an increase in the stationary equilibrium level of welfare is not. Moreover, the magnitude of the increase is large. In a closely related context involving changes in taxation policy, \cite{Lucas1990} refers to a projected increase in aggregate consumption of 7\% as ``the largest genuinely free lunch I have seen in 25 years in this business'', and ``about twice the welfare gain that I have elsewhere estimated would result from eliminating a 10\% inflation''.
	
	Our calculation of a 6.6\% increase in welfare pertains to the welfare level obtained after the economy has adjusted to the new stationary equilibrium under the optimal tax rates. It does not pertain to the immediate or short-term effect on welfare of switching to the optimal tax rates. It is conceivable that the transition to the new stationary equilibrium, which must necessarily involve a period of depressed consumption while capital is accumulated, may be too painful to justify the long-term benefits. To address this matter we calculate the path along which our model economy transitions to the new stationary equilibrium following a change to the optimal tax rates. We find that aggregate consumption drops by 2.3\% for workers and by 10\% for entrepreneurs immediately following the change in tax rates. In subsequent years, the aggregate consumption of both groups rises. The aggregate consumption of workers surpasses its initial level within six years, while the aggregate consumption of entrepreneurs never fully recovers. The aggregate consumption of all agents surpasses its initial level within ten years. The interest rate is elevated during the period of depressed consumption, inducing greater saving by workers which is transferred to entrepreneurs as debt and used to increase the capital stock.
	
	To assess the political viability of changing to the optimal tax rates, we calculate the proportion of agents whose utility rises immediately following such a change. This utility incorporates an agent's current consumption as well as their anticipation of future consumption. We find that 86\% of agents experience a rise in utility immediately following the implementation of optimal tax rates. The percentage is 93\% among workers and 26\% among entrepreneurs. We take this to mean that a large majority of all agents, but only a minority of entrepreneurs, view the temporary period of depressed aggregate consumption to be worth the long-term benefits.
	
	Our research builds on a literature on optimal taxation too voluminous to be effectively summarized here; see \cite{BastaniWaldenstrom2020} for a partial survey focusing on the taxation of capital. Two prominent early theoretical contributions are \cite{Chamley1986} and \cite{Judd1985}. The details of the models analyzed in these articles differ, but share the general setting of an economy populated by infinitely lived agents with perfect foresight. Time paths of labor and capital income tax rates are chosen to maximize welfare subject to generating a fixed amount of revenue. The central finding is that the welfare-maximizing path of capital income tax rates decreases to zero over time. A pointed critique put forward recently in \cite{StraubWerning2020} has disputed technical aspects of the Chamley-Judd result, while acknowledging that the issues raised do not apply to a number of other studies establishing the optimality of long-run zero capital income taxation in variations of the Chamley-Judd framework.
	
	Prominent economists have interpreted the Chamley-Judd result as a justification for eliminating capital income taxation; see, for instance, \cite{AtkesonChariKehoe1999}, helpfully titled ``Taxing capital income: A bad idea''. Others have taken a more skeptical view, with \cite{ConesaKitaoKrueger2009} having the equally helpful title ``Taxing capital? Not a bad idea after all!'' and \cite{BassettoCui2024} arguing that, in the presence of financial frictions, it may be optimal to levy a positive long-run rate of capital income taxation as a means of financing government lending to liquidity-constrained entrepreneurs. \cite{Coleman2000} presents a nuanced perspective, drawing attention to the fact that while the optimal path of capital income tax rates in the Chamley-Judd models is zero in the long-run, it can initially be very high. For instance, in an example with additively separable preferences discussed in \cite{Chamley1986}, the optimal initial rate of capital income taxation is 100\%, and remains at 100\% until some fixed time at which it falls to zero. The period of confiscatory taxation lasts for several years in the numerical calibrations considered in \cite{Coleman2000}. \cite{Lucas1990} observes that a path of capital income tax rates which is initially very high and then falls to zero may be regarded as ``imitating a capital levy on the initial stock'', and in this sense resembles a tax on initial wealth. It is natural to ask whether a consumption tax may provide a simpler alternative to the peculiar path of capital income tax rates which is optimal in the Chamley-Judd framework. This is essentially the question addressed in \cite{Coleman2000}; the answer supplied is, in part, that deriving all revenue from a constant rate of consumption taxation achieves nearly the same welfare as the optimal paths of labor income, capital income and consumption tax rates. Though the modeling framework used in this article is very different, we too find that deriving all revenue from a constant rate of consumption taxation achieves a welfare level that is nearly optimal.
	
	The lack of idiosyncratic variation in the return to capital in the Chamley-Judd framework precludes the possibility of capital income taxation encouraging capital accumulation through the Domar-Musgrave effect. The same is true of other models, such as those in \cite{Aiyagari1995}, \cite{Imrohoroglu1998} and \cite{ConesaKitaoKrueger2009}, in which agents face uncertainty about their future labor income but not about future returns to capital. Within the realm of heterogeneous agent macroeconomics there are relatively few studies of optimal taxation in which the Domar-Musgrave effect plays a significant role. One such study is \cite{Panousi2010}, already mentioned above. Others include \cite{PanousiReis2012,PanousiReis2021}, \cite{BoadwaySpiritus2021} and \cite{GerritsenJacobsRusuSpiritus2020}. These studies all concern the optimal choice of labor and/or capital income tax rates in settings where future returns to capital are uncertain, and find that it is optimal to tax capital income, at least when uncertainty is sufficiently strong. They do not pertain to a setting where a consumption tax is available, as in this article. Past literature shows that the availability of a consumption tax matters a great deal for whether capital income should be taxed. In \cite{Imrohoroglu1998} and \cite{Coleman2000}, the optimal constant rate of capital income taxation is zero or close to zero if a consumption tax is available, but is otherwise substantially greater than zero. We show in this article that the Domar-Musgrave effect can lead the optimal capital income tax rate to be substantially greater than zero even if a consumption tax is available.
	
	\section{Model}\label{sec:model}
	
	Our model extends the one in \cite{Angeletos2007} and is similar to the one in \cite{Panousi2010}. It extends the former by introducing taxation, Markov-switching entrepreneurial ability, and random mortality. The characterization of the stationary distribution of wealth, excess demand equations and equilibrium tax revenue provided in Sections \ref{sec:aggregates}, \ref{sec:equilibrium} and \ref{sec:tax} goes beyond what is provided in the two studies just cited and is obtained by applying results established in \cite{BeareToda2022}. We maintain a discrete-time setting as in \cite{Angeletos2007} rather than the continuous-time setting of \cite{Panousi2010}, but note that \cite{BeareSeoToda2022} provides continuous-time analogs to the discrete-time results in \cite{BeareToda2022} which could be applied in the present context if a continuous-time treatment was preferred. 
	
	\subsection{Agents}\label{sec:agents}
	
	Our model economy is populated by a unit mass of agents. Time is divided into discrete periods indexed by $t\in\mathbb Z_+\coloneqq\{0,1,2,\dots\}$. The decision problems faced by agents in each period $t$ are distinguished by two state variables: their wealth $W_t$, which comprises privately owned capital and risk-free bond holdings, and an entrepreneurial ability state $J_t$ taking values in a finite set $\cN=\{1,\dots,N\}$. An agent's ability state $J_t$ evolves exogenously as a homogeneous Markov chain with irreducible transition probability matrix $\Pi=(\pi_{nn'})$. It affects their productivity when engaged in private enterprise, as described in Section \ref{sec:budget}.
	
	We adopt the perpetual youth framework introduced in \cite{Yaari1965} and \cite{Blanchard1985} in which, when transitioning between time periods, an agent survives with probability $\upsilon\in (0,1)$ and perishes with probability $1-\upsilon$. Actuarially fair life insurance companies trade annuities in exchange for ownership of an agent's wealth or debt upon mortality. Thus, an agent's wealth or debt carried over from one period to the next is multiplied by $1/\upsilon$ if they survive, or is reduced to zero if they perish. When an agent perishes, they are replaced with a new agent endowed with zero wealth and an ability state drawn from a probability distribution $\varpi$ on $\cN$. Mortality occurs independently of all other variables in the model.
	
	\subsection{Production, wealth, and budget constraint}\label{sec:budget}
	
	An agent commences period $t$ with ability state $J_t$ as well as the physical capital $K_t$ and risk-free savings or debt $B_t$ carried over from the previous period. (A new agent is endowed with no resources.) The agent then hires labor $L_t$ at time-invariant price (wage) $\omega$ to operate their private enterprise and produce $F_{J_t}(K_t,L_t)$ units of a consumption good, where $F_n:[0,\infty)^2\to[0,\infty)$ is the production function in ability state $n$.
	
	We allow there to be one ability state (say, $n=1$) for which $F_n$ is identically equal to zero. An agent with this ability state may be understood to be a pure worker. In all other ability states, $F_n$ is assumed to be continuous, nonnegative homogeneous of degree one (i.e., constant-returns-to-scale), strictly concave in the second argument when the first argument is positive (i.e., diminishing marginal returns to labor), and to satisfy $F_n(0,\ell)=0$ for all $\ell\ge 0$ and the Inada condition $\lim_{\ell \to \infty}F_n(1,\ell)/\ell=0$. We maintain these conditions on $F_n$ throughout our analysis. A typical parametrization satisfying them, which we use in our numerical calibration in Section \ref{sec:numerical}, is the Cobb-Douglas production function $F_n(k,\ell)=A_nk^\alpha \ell^{1-\alpha}$, where $A_n\geq0$ is called total factor productivity and $\alpha\in (0,1)$ is called the output elasticity of capital.
	
	As in \citet{Angeletos2007}, we treat capital and the consumption good as interchangeable, and normalize their price to one. We suppose that a flat tax rate of $\tau_\mathrm{K}\in[0,1)$ is applied to the profits from production; that is, $F_{J_t}(K_t,L_t)-\delta K_t-\omega L_t$, the productive output minus the costs of capital depreciation and hired labor. Bond holdings are measured in units of the consumption good, and generate a time-invariant rate of interest. We suppose that interest earned from bonds is taxed at the flat rate $\tau_\mathrm{K}$, and denote by $R$ the post-tax gross rate of interest.\footnote{Alternatively, it can be understood that interest earned from bonds is exempt from capital income taxation, and $R$ is the gross rate of interest without taxation. No revenue is generated from the taxation of interest in equilibrium due to bond market clearing; see Section \ref{sec:equilibrium}.} Thus we define the wealth of an agent after production by
	\begin{equation}\label{eq:initW}
		W_t=K_t+RB_t+(1-\tau_\mathrm{K})[F_{J_t}(K_t,L_t)-\delta K_t-\omega L_t].
	\end{equation}
	
	All agents are assumed to supply one unit of labor inelastically each period, earning pre-tax wage $\omega$. Wages are taxed at a flat rate $\tau_\mathrm{L}\in[0,1)$, so the post-tax wage is $(1-\tau_\mathrm{L})\omega$. Agents choose how to divide their wealth and labor income between current consumption $C_t$ as well as the physical capital $K_{t+1}$ and risk-free bonds $B_{t+1}$ to be held at the beginning of the next period. Consumption is taxed at a flat rate $\tau_\mathrm{C}\in[0,\infty)$. The budget constraint in period $t$ is thus
	\begin{equation}\label{eq:budget}
		(1+\tau_\mathrm{C})C_t+\upsilon K_{t+1}+\upsilon B_{t+1}=W_t+(1-\tau_\mathrm{L})\omega,
	\end{equation}
	where wealth $W_t$ is as in \eqref{eq:initW}, and we recall that the role played by life insurance companies has the effect of multiplying a surviving agent's capital and bond holdings by $1/\upsilon$ each period. The agent is required to choose $C_t\ge 0$ and $K_{t+1}\ge 0$. Bond holdings $B_{t+1}$ may be positive, zero or negative, subject to a natural borrowing constraint. Specifically, we define the borrowing limit $\ubar{b}\le 0$ such that the agent is able to roll over debt indefinitely without engaging in private enterprise or consumption. Combining \eqref{eq:initW} and \eqref{eq:budget} together with $C_t=K_t=K_{t+1}=L_t=0$ and $B_t=B_{t+1}=\ubar{b}$, we see that $\upsilon \ubar{b}=R\ubar{b}+(1-\tau_\mathrm{L})\omega$, so that the borrowing limit $\ubar{b}$ satisfies
	\begin{equation}
		\ubar{b}=-\frac{(1-\tau_\mathrm{L})\omega}{R-\upsilon},\label{eq:borrowlimit}
	\end{equation}
	where we require $R>\upsilon$ for $\ubar{b}$ to be well-defined. We maintain this condition on $R$ throughout our analysis.
	
	\subsection{Preferences}\label{sec:preferences}
	
	As in \citet{Angeletos2007}, agents are assumed to have Epstein-Zin-Weil preferences with discount factor $\beta\in(0,1)$, unit elasticity of intertemporal substitution,\footnote{It is argued convincingly in \citet[p.~14]{Angeletos2007} that the unit case is the relevant one if we are concerned primarily with the behavior of wealthier agents. Other choices of the intertemporal elasticity of substitution greatly complicate the mathematics to follow.} and relative risk aversion $\gamma>0$. Such preferences involve a recursive formulation of utility in which
	\begin{equation}
		U_t=\exp((1-\beta)\log C_t+\beta\log\mu_t(U_{t+1})).\label{eq:EZ}
	\end{equation}
	Here, $U_t>0$ is the agent's utility in period $t$ depending on their current and (uncertain) future consumption, and the quantity $\mu_t(U_{t+1})$ is the Kreps-Porteus certainty equivalent of $U_{t+1}$, given by
	\begin{equation*}
		\mu_t(U_{t+1})=\nu_\gamma^{-1}(\E(\nu_\gamma(U_{t+1})\mid J_t,W_t)),
	\end{equation*}
	where $\nu_\gamma:(0,\infty)\to\R$ is the Box-Cox transformation
	\begin{equation*}
		\nu_\gamma(c)=\begin{cases*}
			\frac{c^{1-\gamma}-1}{1-\gamma}& if $\gamma\neq 1$,\\
			\log c& if $\gamma=1$.
		\end{cases*}
	\end{equation*}
	The discount factor $\beta$ should be understood to incorporate both the agent's preference for current over future consumption, and their awareness of the risk of mortality. For further discussion of recursive utility and Epstein-Zin-Weil preferences we refer the reader to \citet{KrepsPorteus1978}, \citet{EpsteinZin1989} and \citet{Weil1989}.
	
	\subsection{Optimal decision rules}\label{sec:optimal}
	
	Agents choose the labor input for their private enterprise, and choose how to allocate their wealth after production between consumption, physical capital and bonds. It will be convenient to first solve the labor choice problem, which is straightforward, and then solve the more complicated wealth allocation problem. Noting that utility is monotone in consumption and that the agent can choose hired labor $L_t$ after observing the ability state $J_t$ and taking the installed capital $K_t$ as given, it is clear from \eqref{eq:initW} that the agent chooses hired labor $L_t\ge 0$ to maximize the pre-tax profit $F_{J_t}(K_t,L_t)-\delta K_t-\omega L_t$. For each $n\in \cN$ and $\omega>0$ we define
	\begin{subequations}\label{eq:relln}
		\begin{align}
			r_n(\omega)&=(1-\tau_\mathrm{K})\max_{\ell\ge 0}(F_n(1,\ell)-\delta-\omega\ell),\label{eq:rn}\\
			\ell_n(\omega)&=\argmax_{\ell\ge 0}(F_n(1,\ell)-\delta-\omega\ell),\label{eq:elln}
		\end{align}
	\end{subequations}
	noting that the conditions imposed on $F_n$ in Section \ref{sec:budget} (specifically, diminishing marginal returns to labor and the Inada condition) ensure that $F_n(1,\ell)-\omega\ell$ is uniquely maximized by some $\ell\geq0$. Using the assumed homogeneity of $F_n$, the wealth \eqref{eq:initW} maximized over hired labor $L_t\ge 0$ then becomes
	\begin{equation}
		W_t=(1+r_{J_t}(\omega))K_t+RB_t,\label{eq:maxW}
	\end{equation}
	and the optimal labor input is $L_t=\ell_{J_t}(\omega)K_t$. Combining \eqref{eq:budget} and \eqref{eq:maxW} gives
	\begin{equation*}
		(1+\tau_\mathrm{C})C_t+\upsilon K_{t+1}+\upsilon B_{t+1}=(1+r_{J_t}(\omega))K_t+RB_t+(1-\tau_\mathrm{L})\omega.
	\end{equation*}
	Subtracting $\upsilon\ubar{b}=R\ubar{b}+(1-\tau_\mathrm{L})\omega$ from both sides and using \eqref{eq:borrowlimit}, we obtain
	\begin{equation}
		(1+\tau_\mathrm{C})C_t+\upsilon K_{t+1}+\upsilon (B_{t+1}-\ubar{b})=(1+r_{J_t}(\omega))K_t+R(B_t-\ubar{b}). \label{eq:budget2}
	\end{equation}
	Equation \eqref{eq:budget2} may be viewed as the budget constraint which applies to the agent's wealth allocation problem, labor having already been optimally chosen.
	
	To solve the wealth allocation problem it will be useful to introduce further notation. We define the human wealth (present discounted value of future labor income taking into account mortality)
	\begin{equation}
		h=\sum_{t=0}^\infty (\upsilon/R)^t(1-\tau_\mathrm{L})\omega=\frac{(1-\tau_\mathrm{L})\omega}{1-\upsilon/R}=-R\ubar{b},\label{eq:humanwealth}
	\end{equation}
	the total wealth (financial wealth plus human wealth)
	\begin{equation}
		S_t= (1+r_{J_t}(\omega))K_t+R(B_t-\ubar{b})=W_t+h,\label{eq:S_t}
	\end{equation}
	and the fraction of post-consumption total wealth allocated to physical capital
	\begin{equation}
		\theta_t= \frac{\upsilon K_{t+1}}{S_t-(1+\tau_\mathrm{C})C_t}\ge 0.\label{eq:frack}
	\end{equation}
	Using \eqref{eq:budget2}, \eqref{eq:S_t}, and the borrowing limit $B_{t+1}\ge \ubar{b}$, it follows from the definition of $\theta_t$ in \eqref{eq:frack} that
	\begin{equation}
		1-\theta_t=\frac{\upsilon(B_{t+1}-\ubar{b})}{S_t-(1+\tau_\mathrm{C})C_t}\ge 0.\label{eq:fracb}
	\end{equation}
	We must therefore have $\theta_t\in [0,1]$. Combining \eqref{eq:budget2}--\eqref{eq:fracb}, we can compactly write the budget constraint for the wealth allocation problem as
	\begin{equation}
		S_{t+1}=R_{J_{t+1}}(\theta_t)(S_t-(1+\tau_\mathrm{C})C_t),\label{eq:budget3}
	\end{equation}
	where we define the gross return on total wealth
	\begin{equation*}
		R_n(\theta)= \frac{1}{\upsilon}((1+r_n(\omega))\theta+R(1-\theta)),
	\end{equation*}
	suppressing its dependence on $R$ and $\omega$. We will return to \eqref{eq:budget3} in Section \ref{sec:aggregates} when our focus turns to the distribution of wealth.
	
	The wealth allocation problem solved by an agent with total wealth $S_t$ in period $t$ may be viewed as a maximization over two variables: current consumption $C_t\in[0,S_t/(1+\tau_\mathrm{C})]$ and the portfolio weight $\theta_t\in[0,1]$. Given these two variables, we may recover the agent's choice of $K_{t+1}$ from \eqref{eq:frack} and their choice of $B_{t+1}$ from \eqref{eq:fracb}, these choices automatically satisfying the budget constraint \eqref{eq:budget2}. Let $V_n^*(s)$ be the value function for the wealth allocation problem: the utility $U_t$ achieved by an optimally behaving agent with $J_t=n$ and $S_t=s$. In view of the utility recursion \eqref{eq:EZ} and Bellman's principle of optimality, the value function $V_n^*(s)$ solves the Bellman equation defined by
	\begin{equation}\label{eq:vna}
		V_n(s)=\max_{\substack{c\in[0,s/(1+\tau_\mathrm{C})]\\ \theta\in [0,1]}}\exp\left((1-\beta)\log c+\beta\log \mu_n\left(V_{n'}(R_{n'}(\theta)(s-(1+\tau_\mathrm{C})c))\right)\right),
	\end{equation}
	where $n'=J_{t+1}$, and $\mu_n(\cdot)=\nu_\gamma^{-1}\left(\E(\nu_\gamma(\cdot)\mid J_t=n)\right)$ is the Kreps-Porteus certainty equivalent conditional on $J_t=n$.
	
	We will shortly state a result, Lemma \ref{lem:optimal}, solving the maximization on the right-hand side of \eqref{eq:vna} when the candidate value function is of the form $V_n(s)=a_ns$, where $a_1,\dots,a_N$ are arbitrary positive constants. To this end, for each $n\in\cN$ and $a=(a_1,\dots,a_N)\gg 0$, we define the function $g_n(\cdot;a):[0,1]\to \R$ by
	\begin{equation}\label{eq:g}
		g_n(\theta;a)=\sum_{n'=1}^N\pi_{nn'}\nu_\gamma(a_{n'}R_{n'}(\theta)).
	\end{equation}
	Noting that $\nu_\gamma$ is continuous and strictly concave and $\theta\mapsto R_{n'}(\theta)$ is affine, we see that $g_n(\cdot;a)$ is continuous and concave, and strictly so unless $1+r_{J_{t+1}}(\omega)=R$ almost surely conditional on $J_t=n$. Hence (generically) there exists a unique value of $\theta \in [0,1]$ at which $g_n(\theta;a)$ is maximized, which we denote by $\theta_n(a)$. (If $1+r_{J_{t+1}}(\omega)=R$ almost surely conditional on $J_t=n$, then any $\theta\in [0,1]$ is optimal.) Define $\kappa_n(a)=\nu_\gamma^{-1}(g_n(\theta_n(a);a))$.
	
	\begin{lem}\label{lem:optimal}
		Let $\theta_n(a)$ be a maximizer of \eqref{eq:g}. The maximum on the right-hand side of \eqref{eq:vna} with candidate value function $V_n(s)=a_ns$ is achieved by setting $\theta=\theta_n(a)$ and
		\begin{equation*}
			c=\frac{1-\beta}{1+\tau_\mathrm{C}}s.
		\end{equation*}
		The maximum achieved is equal to
		\begin{equation*}
			\left(\frac{1-\beta}{1+\tau_\mathrm{C}}\right)^{1-\beta}(\beta\kappa_n(a))^\beta s.
		\end{equation*}
	\end{lem}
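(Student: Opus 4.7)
The plan is to exploit the positive homogeneity of the Kreps-Porteus certainty equivalent $\mu_n$ to decouple the inner maximization into two separate one-dimensional problems, one in $\theta$ and one in $c$.

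First I would verify that $\mu_n$ is positive homogeneous of degree one, i.e.\ $\mu_n(\lambda X)=\lambda \mu_n(X)$ for $\lambda>0$. This is the one computational step that requires a little care because of the two cases in the Box-Cox transformation. For $\gamma\neq 1$, substituting the formula for $\nu_\gamma$ into $\mu_n(\lambda X)=\nu_\gamma^{-1}(\E(\nu_\gamma(\lambda X)\mid J_t=n))$ and using $\nu_\gamma(\lambda x)=\lambda^{1-\gamma}\nu_\gamma(x)+(\lambda^{1-\gamma}-1)/(1-\gamma)$ yields $\mu_n(\lambda X)=\lambda\mu_n(X)$ after simplification; for $\gamma=1$ this follows directly from $\log(\lambda x)=\log\lambda+\log x$. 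With candidate value function $V_{n'}(s)=a_{n'}s$, this homogeneity lets me write
\begin{equation*}
	\mu_n\bigl(a_{n'}R_{n'}(\theta)(s-(1+\tau_\mathrm{C})c)\bigr)=(s-(1+\tau_\mathrm{C})c)\,\mu_n(a_{n'}R_{n'}(\theta)),
\end{equation*}
since $s-(1+\tau_\mathrm{C})c$ is deterministic (and nonnegative by the feasibility constraint on $c$). Substituting this into the right-hand side of \eqref{eq:vna} splits the argument of $\exp(\cdot)$ into a piece depending only on $c$ and a piece depending only on $\theta$.

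Next I would tackle the portfolio subproblem. The $\theta$-dependent piece is $\beta\log\mu_n(a_{n'}R_{n'}(\theta))$. Since $\nu_\gamma$ is strictly increasing on $(0,\infty)$, so is $\nu_\gamma^{-1}$, and so maximizing $\mu_n(a_{n'}R_{n'}(\theta))$ over $\theta\in[0,1]$ is equivalent to maximizing $\E(\nu_\gamma(a_{n'}R_{n'}(\theta))\mid J_t=n)$, which is exactly $g_n(\theta;a)$ as defined in \eqref{eq:g}. By definition, this is maximized at $\theta=\theta_n(a)$, and the resulting value of $\mu_n$ is $\nu_\gamma^{-1}(g_n(\theta_n(a);a))=\kappa_n(a)$.

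Then I would tackle the consumption subproblem: maximize $(1-\beta)\log c+\beta\log(s-(1+\tau_\mathrm{C})c)$ over $c\in[0,s/(1+\tau_\mathrm{C})]$. This is strictly concave with interior maximizer found from the first-order condition $(1-\beta)/c=\beta(1+\tau_\mathrm{C})/(s-(1+\tau_\mathrm{C})c)$, which rearranges to $c=(1-\beta)s/(1+\tau_\mathrm{C})$. Substituting back gives $s-(1+\tau_\mathrm{C})c=\beta s$. Finally, plugging the optimal $c$, $\theta$ and $\mu_n$-value into the split expression inside $\exp(\cdot)$ and exponentiating yields
\begin{equation*}
	\left(\frac{(1-\beta)s}{1+\tau_\mathrm{C}}\right)^{1-\beta}(\beta s)^\beta \kappa_n(a)^\beta=\left(\frac{1-\beta}{1+\tau_\mathrm{C}}\right)^{1-\beta}(\beta\kappa_n(a))^\beta s,
\end{equation*}
which matches the claimed expression. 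The only genuinely non-routine step is the homogeneity verification for $\mu_n$; once that is in hand, the separation of variables and the two one-dimensional optimizations are standard.
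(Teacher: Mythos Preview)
Your proof is correct and follows essentially the same approach as the paper: both use the positive homogeneity of the certainty equivalent to separate the log-maximand into a $c$-piece and a $\theta$-piece, then solve each one-dimensional problem. The paper simply asserts the separated form \eqref{eq:logmax} without justification, whereas you explicitly verify the homogeneity of $\mu_n$; otherwise the arguments are identical.
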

	A proof of Lemma \ref{lem:optimal}, and proofs of the other numbered mathematical statements in this section, are provided in Appendix \ref{sec:proofwealth}. It is clear from Lemma \ref{lem:optimal} that if a solution to the Bellman equation \eqref{eq:vna} takes the form $V_n(s)=a_ns$, then $a$ satisfies the system of nonlinear equations
	\begin{equation}\label{eq:nlsystem}
		a_n=\left(\frac{1-\beta}{1+\tau_\mathrm{C}}\right)^{1-\beta}(\beta\kappa_n(a))^\beta,\quad n\in\cN.
	\end{equation}
	The following result establishes that there is a unique solution $a=a^\ast$ to \eqref{eq:nlsystem} and that the value function $V_n^\ast(s)=a_n^\ast s$ solves the Bellman equation \eqref{eq:vna}. It also characterizes the agent's optimal decision rules.
	
	\begin{prop}\label{prop:optrule}
		The Bellman equation \eqref{eq:vna} is solved by the value function $V_n^*(s)=a_n^*s$, where $a^*=(a_1^*,\dots,a_N^*)$ uniquely solves \eqref{eq:nlsystem}. Letting $\theta_n^\ast=\theta_n(a^\ast)$, the optimal choices of consumption, capital, labor, and bonds corresponding to $V_n^\ast(s)$ are
		\begin{subequations}\label{eq:optrule}
			\begin{align}
				C_n^*(s)&=\frac{1-\beta}{1+\tau_\mathrm{C}}s,\label{eq:crule}\\
				K_n^*(s)&=\frac{\beta}{\upsilon}\theta_n^* s,\label{eq:krule}\\
				L_n^*(s)&=\frac{\beta}{\upsilon}\theta_n^*\ell_n(\omega)s,\label{eq:lrule}\\
				B_n^*(s)&=-\frac{h}{R}+\frac{\beta}{\upsilon}(1-\theta_n^*) s.\label{eq:brule}
			\end{align}
		\end{subequations}
	\end{prop}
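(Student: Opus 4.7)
The plan is to apply Lemma \ref{lem:optimal} to reduce the Bellman equation \eqref{eq:vna} to the finite-dimensional system \eqref{eq:nlsystem}, establish existence and uniqueness of a solution $a^\ast\gg 0$ via a Blackwell-style contraction argument, and then verify by standard arguments that the resulting linear function is the value function rather than merely a solution of the functional equation.

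First, I substitute the candidate $V_n(s)=a_n s$ into the right-hand side of \eqref{eq:vna}. Lemma \ref{lem:optimal} identifies the maximizers as $\theta=\theta_n(a)$ and $c=(1-\beta)s/(1+\tau_\mathrm{C})$, and shows that the maximized value equals $\bigl(\tfrac{1-\beta}{1+\tau_\mathrm{C}}\bigr)^{1-\beta}(\beta\kappa_n(a))^\beta s$. Hence a linear function is a fixed point of the Bellman operator if and only if its slope vector $a$ satisfies the nonlinear system \eqref{eq:nlsystem}.

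Second, I establish existence and uniqueness of $a^\ast\gg 0$ solving \eqref{eq:nlsystem}. A direct computation shows that the Kreps--Porteus certainty equivalent $\mu_n$ is positively homogeneous of degree one, hence so is $\kappa_n$, that is, $\kappa_n(\lambda a)=\lambda\kappa_n(a)$ for all $\lambda>0$. Writing $a_n=\e^{b_n}$ and taking logarithms in \eqref{eq:nlsystem}, the system becomes a fixed point problem $b=T(b)$ with
\[
T(b)_n=(1-\beta)\log\tfrac{1-\beta}{1+\tau_\mathrm{C}}+\beta\log\beta+\beta\log\kappa_n(\e^{b}).
\]
The map $T\colon\R^N\to\R^N$ is coordinatewise monotone (since each $R_{n'}(\theta)>0$, so $\kappa_n$ is coordinatewise nondecreasing in $a$) and satisfies the exact discount identity $T(b+\lambda\mathbf{1})=T(b)+\beta\lambda\mathbf{1}$ for every $\lambda\in\R$ by homogeneity of $\kappa_n$. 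Blackwell's sufficient conditions then imply that $T$ is a $\beta$-contraction on $(\R^N,\|\cdot\|_\infty)$, yielding a unique fixed point $b^\ast$ and hence a unique $a^\ast=\e^{b^\ast}\gg 0$.

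Third, a verification argument is needed to confirm that $V_n^\ast(s)=a_n^\ast s$ is the value function. Since the reduced budget constraint \eqref{eq:budget3} is linear in total wealth $s$ and the utility aggregator \eqref{eq:EZ} preserves positive homogeneity, the finite-horizon value functions are linear in $s$ with slope vectors obtained by iterating the logarithmic Bellman map whose fixed point is $T$. The contraction property of $T$ established above implies these slope vectors converge to $a^\ast$ as the horizon diverges, so $V_n^\ast(s)=a_n^\ast s$ indeed equals the supremum utility, and the stationary policy from Lemma \ref{lem:optimal} attains it. The decision rules \eqref{eq:crule}--\eqref{eq:brule} are then read off from Lemma \ref{lem:optimal} together with \eqref{eq:frack}, \eqref{eq:fracb}, and the relation $L_t=\ell_{J_t}(\omega)K_t$ derived earlier in Section \ref{sec:optimal}. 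The main obstacle is the contraction argument: monotonicity is immediate, but verifying the discount property requires care in extracting the degree-one homogeneity of $\mu_n$ from the Box--Cox form of $\nu_\gamma$, with the $\gamma=1$ case handled separately. The remaining verification is routine given the linear-in-wealth structure of the problem.
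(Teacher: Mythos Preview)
Your approach is correct and essentially identical to the paper's: both take logarithms to convert \eqref{eq:nlsystem} into a fixed-point problem for a map $T$ on $\R^N$, verify Blackwell's monotonicity and discounting conditions using the degree-one homogeneity of $\kappa_n$, and then read off the decision rules from Lemma \ref{lem:optimal} together with \eqref{eq:frack}, \eqref{eq:fracb}, \eqref{eq:humanwealth} and $L_t=\ell_{J_t}(\omega)K_t$. Your third step---verifying that $a_n^\ast s$ is the genuine value function rather than merely a fixed point of the Bellman operator---goes beyond what the paper proves in this proposition; the paper instead handles this via a separate uniqueness result (Proposition \ref{pro:uniqueV}) showing that $V_n^\ast(s)=a_n^\ast s$ is the unique solution of \eqref{eq:vna} in a class of functions bounded above and below by linear functions, using a sandwich argument driven by the same contraction $T$ rather than a finite-horizon limit.
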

	The solution to the Bellman equation provided by Proposition \ref{prop:optrule} is unique within a wide class of candidate value functions. See Proposition \ref{pro:uniqueV} in Appendix \ref{sec:proofwealth} for details.
	
	\subsection{Stationary distribution of wealth}\label{sec:aggregates}
	
	Substituting the optimal choices $C_{J_t}^*(S_t)$ and $\theta^\ast_{J_t}$ into the budget constraint \eqref{eq:budget3}, we obtain the following law of motion for the total wealth of an optimizing agent:
	\begin{equation*}
		S_{t+1}=\beta R_{J_{t+1}}(\theta_{J_t}^*)S_t=G_{J_tJ_{t+1}}S_t,
	\end{equation*}
	where $G_{nn'}\coloneqq \beta R_{n'}(\theta_n^*)$ is the gross growth rate conditional on transitioning from $J_t=n$ to $J_{t+1}=n'$. The total wealth $S_t$ of an optimizing agent is thus formed as an accumulation of multiplicative shocks, with each shock determined by the current and previous value of the exogenous Markov switching ability state.
	
	As in \citet{Yaari1965} and \cite{Blanchard1985}, the agents in our model are assumed to perish with a fixed probability $1-\upsilon\in(0,1)$ each period, being replaced with a new agent with zero financial wealth and ability state drawn from the distribution $\varpi$. With a slight abuse of notation, in what follows we let $S_t$ and $J_t$ denote the total wealth and ability state of a \emph{succession} of agents. When an agent perishes between periods $t$ and $t+1$ and is replaced with a new agent, because a new agent is endowed only with their human wealth $h$, $S_{t+1}$ is reset to $h$ and $J_{t+1}$ is drawn from the distribution $\varpi$. The sequence of pairs $(S_t,J_t)_{t\in\mathbb Z_+}$ is thus a \emph{Markov multiplicative process with reset} as defined in \citet{BeareToda2022}, with the obvious modification that the reset value is $h$ instead of one.
	
	We can apply the results in \citet{BeareToda2022} on Markov multiplicative processes with reset to characterize the distribution of wealth in our model economy. Proposition 3 in that article implies the existence of a unique distribution for $(S_0,J_0)$ such that $(S_t,J_t)_{t\in\mathbb Z_+}$ is stationary; we call the time-invariant distribution of $(S_t,J_t)$ under this stationary initialization the \emph{stationary joint distribution of wealth and ability}, and call the time-invariant distribution of $S_t$ the \emph{stationary distribution of wealth}. The time-invariant distribution of $J_t$ can be represented by an $N\times1$ vector $p$ whose $n$th entry $p_n$ is the probability that $J_t=n$ under stationarity. Taking into account reset, this is the unique stationary distribution corresponding to the irreducible transition probability matrix $\upsilon\Pi+(1-\upsilon)1_N\varpi^\top$, where $1_N$ denotes an $N\times1$ vector of ones.
	
	Proposition \ref{prop:stationary}, to be stated momentarily, provides a formula for the Mellin transform of the stationary distribution of wealth conditional on the ability state, and establishes (under a mild regularity condition) that the right tail of the stationary distribution of wealth is Pareto (i.e., exhibits a power law) with a certain rate of decay. It is proved using Theorem 1, Lemma 2 and Proposition 3 in \citet{BeareToda2022}. Our statement of Proposition \ref{prop:stationary} requires some additional notation. We let $\A(z)$ denote an $N\times N$ matrix $\A(z)$ depending on a complex variable $z$, with $(n,n')$-entry equal to $\upsilon\pi_{nn'}G_{nn'}^z$. We let $\mathcal I_-$ denote the set of all real $z$ such that $\rho(\A(z))$, the spectral radius of $\A(z)$, is less than one. Proposition 1 in \citet{BeareToda2022} implies that $\rho(\A(z))$ is a convex function of real $z$ satisfying $\rho(\A(0))=\upsilon<1$, so the set $\mathcal I_-$ is convex and contains both positive and negative values. We let $\mathrm{I}$ denote the $N\times N$ identity matrix, and let $e^{(n)}$ denote the $N\times1$ vector with $n$th entry equal to one and all other entries equal to zero.
	\begin{prop}\label{prop:stationary}
		For each complex $z$ with real part belonging to $\mathcal I_-$, the matrix $\mathrm{I}-\A(z)$ is invertible, and a random draw $(S,J)$ from the stationary joint distribution of wealth and ability satisfies
		\begin{align}\label{eq:conditionalwealthMellin}
			\E(S^z\mid J=n)&=(1-\upsilon)p_n^{-1}h^z\varpi^\top(\mathrm{I}-\A(z))^{-1}e^{(n)}
		\end{align}
		for each $n\in\mathcal N$ with $p_n>0$, and
		\begin{align}\label{eq:wealthMellin}
			\E(S^z)&=(1-\upsilon)h^z\varpi^\top(\mathrm{I}-\A(z))^{-1}1_N.		
		\end{align}
		Further, if the equation $\rho(\A(z))=1$ admits a unique positive solution $z=\zeta$, then
		\begin{align}\label{eq:Pareto}
			\lim_{s\to\infty}\frac{\log\mathrm{P}(S>s)}{\log s}&=-\zeta,
		\end{align}
		meaning that the right tail of the stationary distribution of wealth is Pareto with decay rate $\zeta$.
	\end{prop}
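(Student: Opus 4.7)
The paragraph preceding the proposition already identifies $(S_t,J_t)_{t\in\Z_+}$ as a Markov multiplicative process with reset in the sense of \citet{BeareToda2022}, the only departure being that the reset value is $h$ rather than one. The plan is therefore to reduce to the unit-reset case by considering $\tilde{S}_t\coloneqq S_t/h$, apply the appropriate results from \citet{BeareToda2022} to $(\tilde{S}_t,J_t)$, and rescale by $h^z$ at the end. Throughout, I would use that $G_{nn'}>0$ so that $z\mapsto G_{nn'}^z$ is defined unambiguously via the principal branch and satisfies $\lvert G_{nn'}^z\rvert=G_{nn'}^{\mathrm{Re}(z)}$.

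\textbf{Invertibility of $\mathrm{I}-\A(z)$.} For $z$ with real part $x\in\cI_-$, the entries of $\A(z)$ satisfy $\lvert\upsilon\pi_{nn'}G_{nn'}^z\rvert=\upsilon\pi_{nn'}G_{nn'}^x$, so $\A(z)$ is entrywise dominated in modulus by the nonnegative matrix $\A(x)$. By the standard monotonicity of the spectral radius under entrywise modulus bounds for nonnegative matrices (Wielandt's theorem), $\rho(\A(z))\leq\rho(\A(x))<1$, so $1$ is not an eigenvalue of $\A(z)$ and $\mathrm{I}-\A(z)$ is invertible. I would record this as the first step since both Mellin formulas require it.

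\textbf{Conditional and unconditional Mellin transforms.} I would apply Theorem 1 of \citet{BeareToda2022} to the rescaled process $(\tilde{S}_t,J_t)$, which is a Markov multiplicative process with reset value one, multiplicative shocks $G_{J_tJ_{t+1}}$, survival probability $\upsilon$, and reset ability distribution $\varpi$. That theorem provides the closed-form expression $\E(\tilde{S}^z\mid J=n)=(1-\upsilon)p_n^{-1}\varpi^\top(\mathrm{I}-\A(z))^{-1}e^{(n)}$, where the matrix $\A(z)$ is the one defined in the excerpt (its entries are precisely the Mellin-transformed one-step transition kernel, conditional on survival). Since $S=h\tilde{S}$, multiplication by $h^z$ yields \eqref{eq:conditionalwealthMellin}. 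The unconditional formula \eqref{eq:wealthMellin} follows by the total-expectation identity $\E(S^z)=\sum_{n}p_n\E(S^z\mid J=n)$, which telescopes the factor $p_n$ and collapses the sum $\sum_n e^{(n)}$ into $1_N$. Before applying Theorem 1 I would verify that the real $z$ in question lies in the natural domain considered there, which is exactly $\cI_-$ by the existing convention in \citet{BeareToda2022}; the extension to complex $z$ with real part in $\cI_-$ proceeds by analytic continuation, using that both sides of the identity are holomorphic where $\mathrm{I}-\A(z)$ is invertible (established above).

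\textbf{Pareto tail.} For the final claim I would invoke Lemma 2 of \citet{BeareToda2022}, which characterizes the tail index of the stationary distribution of a Markov multiplicative process with reset in terms of the positive zero of $z\mapsto\rho(\A(z))-1$. Since $\rho(\A(z))$ is convex in real $z$ with $\rho(\A(0))=\upsilon<1$ (Proposition 1 of \citet{BeareToda2022}), the uniqueness hypothesis singles out a single positive crossing $\zeta$, at which $\mathrm{I}-\A(z)$ becomes singular and the Mellin transforms in \eqref{eq:conditionalwealthMellin}--\eqref{eq:wealthMellin} blow up; Lemma 2 then converts this into the log-tail statement \eqref{eq:Pareto}. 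The rescaling by $h$ is irrelevant here because tail indices are invariant under multiplication by positive constants. The main obstacle would be a careful bookkeeping check that the parametric assumptions of Theorem 1 and Lemma 2 in \citet{BeareToda2022}---in particular irreducibility of the underlying Markov chain (guaranteed by our assumption on $\Pi$) and positivity of the multiplicative shocks (guaranteed by $R>\upsilon>0$ and $1+r_n(\omega)>0$)---are satisfied by our process, after which the proof reduces to combining the cited results with the rescaling identity.
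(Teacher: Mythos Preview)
Your approach is essentially the same as the paper's: rescale to $\tilde{S}_t=S_t/h$ so that the reset value is one, apply the results of \citet{BeareToda2022} to the unit-reset process, and then multiply back by $h^z$ (with the tail index being scale-invariant). The one substantive slip is that you have interchanged the roles of Theorem~1 and Lemma~2 in \citet{BeareToda2022}: it is Lemma~2 there that delivers the invertibility of $\mathrm{I}-\A(z)$ and the closed-form conditional and unconditional Mellin transforms for complex $z$ with real part in $\cI_-$, while Theorem~1 is the tail result giving the log-limit \eqref{eq:Pareto}. Since Lemma~2 already covers complex $z$ in the strip, your analytic-continuation step is unnecessary; your self-contained Wielandt argument for invertibility is correct and is essentially what underlies that lemma, so it does no harm but is also not needed once the citation is fixed.
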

	
	A mild sufficient condition for the equation $\rho(\A(z))=1$ to admit a unique positive solution, so that the right tail of the stationary distribution of wealth is Pareto, is that we have $\pi_{nn}>0$ and $G_{nn}>1$ for some state $n$. This is a consequence of Proposition 2 in \citet{BeareToda2022}. Note that financial wealth $W_t$ differs from total wealth $S_t$ by the constant value $h$, so if the right tail of the stationary distribution of (total) wealth is Pareto with decay rate $\zeta$ then the same is true for the stationary distribution of financial wealth.
	
	Equation \eqref{eq:wealthMellin} in Proposition \ref{prop:stationary} facilitates the direct computation of the stationary distribution of wealth without resorting to Monte Carlo simulation. Since \eqref{eq:wealthMellin} is valid in particular for all imaginary $z$, it provides the Fourier transform (i.e., characteristic function) of the distribution of $\log S$. The distribution of $\log S$ may be recovered by Fourier inversion and then suitably modified to obtain the distribution of $S$. The conditional distribution of $S$ given $J=n$ may be computed by applying Fourier inversion to \eqref{eq:conditionalwealthMellin} in the same way.
	
	\subsection{Stationary equilibrium}\label{sec:equilibrium}
	
	The stationary joint distribution of wealth and ability depends on two prices: the gross risk-free rate $R$ and the wage $\omega$. We regard these prices as endogenous parameters to be determined by market clearing conditions for the bond market and labor market. We suppose that the risk-free bond is in zero net supply, so that the bond market clears when aggregate demand for bonds is zero. Agents are assumed to supply one unit of labor inelastically each period, so the labor market clears when aggregate demand for labor is one. We therefore say that a given gross risk-free rate $R>\upsilon$ and wage $\omega>0$ constitute a \emph{stationary equilibrium} if a random draw $(S,J)$ from the stationary joint distribution of wealth and ability satisfies the following two conditions:
	\begin{subequations}
		\begin{align}
			\E(B_{J}^*(S))&=0\quad\text{(\emph{bond market clearing})}, \label{eq:bondclear}\\
			\E(L_{J}^*(S))&=1\quad\text{(\emph{labor market clearing})}. \label{eq:laborclear}
		\end{align}
	\end{subequations}
	
	Checking whether a given gross risk-free rate $R$ and wage $\omega$ constitute a stationary equilibrium requires evaluating $\E(B_J^*(S))$ and $\E(L_J^*(S))$, the aggregate demands for bonds and labor. Using Proposition \ref{prop:stationary}, we obtain the following result showing how these aggregate demands may be computed from model parameters. We require the following notation: let $\theta^\ast$ denote the $N\times1$ vector with $n$th entry $\theta^\ast_n$, let $\ell$ denote the $N\times1$ vector with $n$th entry $\ell_n(\omega)$, and let $\odot$ denote the Hadamard (entry-wise) product.
	
	\begin{prop}\label{prop:equilibrium}
		Let $(S,J)$ be a random draw from the stationary joint distribution of wealth and ability. If $\rho(\A(1))<1$, then
		\begin{subequations}
			\begin{align}
				\E(B_{J}^*(S))&=\frac{1-\upsilon}{\upsilon}\beta h\varpi^\top  (\mathrm{I}-\A(1))^{-1}(1_N-\theta^\ast)-\frac{h}{R},\label{eq:aggB}\\
				\E(L_{J}^*(S))&=\frac{1-\upsilon}{\upsilon}\beta h\varpi^\top  (\mathrm{I}-\A(1))^{-1}(\theta^\ast\odot \ell).\label{eq:aggL}
			\end{align}
		\end{subequations}
		Otherwise, $\E(S)=\infty$.
	\end{prop}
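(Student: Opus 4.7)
The strategy is to reduce each aggregate demand to a linear combination of the conditional means $\E(S\mid J=n)$ and then invoke Proposition~\ref{prop:stationary} at $z=1$. Substituting the optimal rules \eqref{eq:brule} and \eqref{eq:lrule},
\begin{align*}
\E(B_J^*(S)) &= -\frac{h}{R} + \frac{\beta}{\upsilon}\sum_{n\in\cN}(1-\theta_n^*)\,p_n\E(S\mid J=n),\\
\E(L_J^*(S)) &= \frac{\beta}{\upsilon}\sum_{n\in\cN}\theta_n^*\ell_n(\omega)\,p_n\E(S\mid J=n),
\end{align*}
so the task reduces to evaluating $p_n\E(S\mid J=n)$ and expressing the sums in matrix form.

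Under the hypothesis $\rho(\A(1))<1$ we have $1\in\cI_-$, so Proposition~\ref{prop:stationary} yields $p_n\E(S\mid J=n)=(1-\upsilon)h\,\varpi^\top(\mathrm{I}-\A(1))^{-1}e^{(n)}$ for each $n\in\cN$. This identity is available for every $n$ because $p_n>0$ throughout: the modified transition kernel $\upsilon\Pi+(1-\upsilon)1_N\varpi^\top$ inherits irreducibility from $\Pi$. Substituting and collecting the coefficients into the vectors $1_N-\theta^*$ and $\theta^*\odot\ell$ collapses each sum to $(1-\upsilon)h\,\varpi^\top(\mathrm{I}-\A(1))^{-1}(1_N-\theta^*)$ and $(1-\upsilon)h\,\varpi^\top(\mathrm{I}-\A(1))^{-1}(\theta^*\odot\ell)$, giving \eqref{eq:aggB} and \eqref{eq:aggL}.

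For the ``otherwise'' clause I would condition a stationary draw on the time $\tau$ since the agent's most recent reset (geometrically distributed with $\Pr(\tau)=(1-\upsilon)\upsilon^\tau$), obtaining the nonnegative-term series representation
\begin{equation*}
\E(S)=(1-\upsilon)h\sum_{\tau=0}^{\infty}\varpi^\top \A(1)^\tau 1_N,
\end{equation*}
valid by monotone convergence regardless of $\rho(\A(1))$. When $\rho(\A(1))\ge 1$, Perron--Frobenius applied to the nonnegative irreducible matrix $\A(1)$ shows the summands do not decay to zero, so the series diverges and $\E(S)=\infty$.

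The main obstacle is the third step. Verifying divergence cleanly requires ruling out cancellation by confirming that both $\varpi$ and $1_N$ have strictly positive inner products with the left and right Perron eigenvectors of $\A(1)$; this follows from the strict positivity of those eigenvectors, which in turn is a consequence of the irreducibility of $\A(1)$ inherited from $\Pi$.
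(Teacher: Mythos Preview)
Your treatment of the case $\rho(\A(1))<1$ is essentially identical to the paper's: both substitute the affine rules \eqref{eq:brule} and \eqref{eq:lrule}, invoke \eqref{eq:conditionalwealthMellin} at $z=1$ to obtain $p_n\E(S\mid J=n)=(1-\upsilon)h\,\varpi^\top(\mathrm{I}-\A(1))^{-1}e^{(n)}$, and sum over $n$.

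The ``otherwise'' clause is where you diverge. The paper argues indirectly: since $z\mapsto\rho(\A(z))$ is convex with $\rho(\A(0))=\upsilon<1$, the hypothesis $\rho(\A(1))\ge1$ forces a unique root $\zeta\in(0,1]$, and then the Pareto tail \eqref{eq:Pareto} from Proposition~\ref{prop:stationary} gives $\E(S)=\infty$ because a Pareto tail with exponent at most one has no first moment. Your route is more direct and self-contained: you write $\E(S)$ as the nonnegative series $(1-\upsilon)h\sum_{\tau\ge0}\varpi^\top\A(1)^\tau 1_N$ by conditioning on age since reset, and then use Perron--Frobenius on the irreducible matrix $\A(1)$ (same zero pattern as $\Pi$, since each $G_{nn'}>0$) to bound the summand below by a positive multiple of $\rho(\A(1))^\tau$. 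This avoids appealing to the tail result and the convexity of the spectral radius, at the cost of re-deriving a piece of the machinery behind Proposition~\ref{prop:stationary}. The paper's argument is shorter because that machinery is already in place; yours would be preferable in a setting where the Pareto tail has not been established.
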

	
	The case where $\E(S)=\infty$ may be regarded as practically uninteresting as it carries the interpretation of infinite aggregate wealth, indicating an unsuitable choice of model parameters.
	
	The aggregate demand formulae in Proposition \ref{prop:equilibrium} may be used to search numerically for a gross risk-free rate $R$ and wage $\omega$ constituting a stationary equilibrium. The two equations defining our market clearing conditions are in general nonlinear in $R$ and $\omega$. Note that $h$, $\A(1)$, $\theta^\ast$ and $\ell$ depend on $R$ and/or $\omega$, though we have suppressed this in our notation. Care should be taken in conducting a numerical search for stationary equilibrium, as conditions ensuring the existence or uniqueness of equilibrium are not available.
	
	\subsection{Welfare}
	
	To identify an optimal taxation mix we require a measure of overall welfare. As in \cite{Imrohoroglu1998} and other studies, we define welfare in terms of the utility of a randomly selected newborn agent. Specifically, we define welfare to be the Kreps-Porteus certainty equivalent of the utility of a randomly selected newborn agent in stationary equilibrium. Since newborn agents are endowed with only their human wealth $h$ and have their ability state drawn from $\varpi$, welfare is equal to $\mu(V_{J}^\ast(h))$, where $\mu(\cdot)=\nu_\gamma^{-1}(\E(\nu_\gamma(\cdot)))$ and $J$ is a random draw from $\varpi$.
	
	The following result provides a simple formula for welfare. The logarithm $\log a^\ast$ and power $(a^\ast)^{1-\gamma}$ should be understood to refer to the entry-wise application of the logarithm or power function to the vector $a^\ast$.
	
	\begin{prop}\label{prop:welfare}
		Let $V_n^\ast(s)=a_n^\ast s$ as in Proposition \ref{prop:optrule}. If $J$ is a random draw from $\varpi$, then
		\begin{align}
			\mu(V_{J}^\ast(h))&=\begin{cases*}
				h\left(\varpi^\top (a^*)^{1-\gamma}\right)^\frac{1}{1-\gamma}& if $\gamma\neq 1$,\\
				h\exp\left(\varpi^\top\log a^\ast\right)& if $\gamma=1$.
			\end{cases*}\label{eq:Wnew}
		\end{align}
	\end{prop}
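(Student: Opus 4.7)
The plan is to prove Proposition \ref{prop:welfare} by direct substitution into the definition of the Kreps-Porteus certainty equivalent $\mu(\cdot)=\nu_\gamma^{-1}(\E(\nu_\gamma(\cdot)))$, exploiting the linearity $V_n^\ast(s)=a_n^\ast s$ established in Proposition \ref{prop:optrule} and the fact that $J$ is drawn from $\varpi$, so that expectations reduce to inner products of the form $\varpi^\top(\cdot)$. The proof splits into the two cases corresponding to the piecewise definition of $\nu_\gamma$, but in each case it is a short calculation.

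First I would handle the case $\gamma\ne 1$. Here $\nu_\gamma(c)=(c^{1-\gamma}-1)/(1-\gamma)$, so applying $\nu_\gamma$ to $V_J^\ast(h)=a_J^\ast h$ and using that $h$ is a deterministic positive constant gives $\nu_\gamma(a_J^\ast h)=(h^{1-\gamma}(a_J^\ast)^{1-\gamma}-1)/(1-\gamma)$. Taking the expectation over $J\sim\varpi$ yields $\E(\nu_\gamma(a_J^\ast h))=(h^{1-\gamma}\varpi^\top (a^\ast)^{1-\gamma}-1)/(1-\gamma)$, where $(a^\ast)^{1-\gamma}$ denotes the vector with entries $(a_n^\ast)^{1-\gamma}$. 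Applying the inverse $\nu_\gamma^{-1}(y)=((1-\gamma)y+1)^{1/(1-\gamma)}$ and factoring out $h$ gives the first line of \eqref{eq:Wnew}.

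Next I would handle the case $\gamma=1$, where $\nu_\gamma(c)=\log c$ and $\nu_\gamma^{-1}(y)=\e^y$. Then $\nu_\gamma(a_J^\ast h)=\log a_J^\ast+\log h$, so $\E(\nu_\gamma(a_J^\ast h))=\varpi^\top\log a^\ast+\log h$, and exponentiating yields the second line of \eqref{eq:Wnew}.

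There is no genuine obstacle here: the argument is a one-line substitution in each case. The only points requiring a small amount of care are (a) recording that $a^\ast$ has strictly positive entries so the logarithm and fractional powers are well-defined (this was established in Proposition \ref{prop:optrule}), and (b) clarifying the entry-wise interpretation of $\log a^\ast$ and $(a^\ast)^{1-\gamma}$, which the statement already flags. Both cases can be combined in the write-up into a short proof of a few lines.
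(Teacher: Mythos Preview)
Your proposal is correct and is essentially the same direct-substitution argument as the paper's proof. The only cosmetic difference is in the $\gamma=1$ case: the paper first records the moment formula $\E(V_J^\ast(h)^z)=h^z\varpi^\top(a^\ast)^z$ and then differentiates at $z=0$ to obtain $\E(\log V_J^\ast(h))$, whereas you compute $\E(\log(a_J^\ast h))$ directly; your route is the more elementary of the two.
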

	
	Note that because the value function $V_n^\ast(s)=a_n^\ast s$ is homogeneous of degree one in total wealth $s$, and agents consume a fixed fraction of their total wealth each period, an $x\%$ increase in welfare may be viewed as equivalent to a permanent $x\%$ increase in consumption.
		
	\subsection{Tax revenue}\label{sec:tax}
	
	To characterize the aggregate tax revenue in stationary equilibrium it is simplest to start with the revenues from labor income and consumption, which are conceptually straightforward. Due to all agents supplying one unit of labor at wage $\omega$, and the excess demand for labor being zero in equilibrium, the aggregate tax revenue from labor income is simply $T_\mathrm{L}\coloneqq\tau_\mathrm{L}\omega$. In view of the optimal consumption rule given in Proposition \ref{prop:optrule}, the consumption tax paid by an agent depends only on their current wealth $s$, and is given by
	\begin{align*}
		T_\mathrm{C}(s)&\coloneqq\frac{\tau_\mathrm{C}}{1+\tau_\mathrm{C}}(1-\beta)s.
	\end{align*}
	The aggregate tax revenue from consumption is therefore $\E(T_\mathrm{C}(S))$, where $S$ is a random draw from the stationary distribution of wealth.
	
	The capital income tax paid by an agent depends on their current ability state as well as their previous wealth and ability states. The determination of aggregate tax revenue from capital income is therefore slightly more complicated. Only surviving agents pay capital income tax, because newborn agents are endowed with no bonds or physical capital. We may disregard capital income tax accrued on bond returns because it is zero in aggregate due to bond market clearing. In view of Proposition \ref{prop:optrule}, a surviving agent with previous ability state $n$ and previous wealth $s$ commands physical capital $(\beta/\upsilon)\theta^\ast_ns$ in the current period. If such an agent has current ability state $n'$ then, recalling \eqref{eq:rn}, we deduce that they earn post-tax profit from production equal to $(\beta/\upsilon)\theta_n^\ast sr_{n'}(\omega)$. Therefore, the expected capital income tax (excluding tax accrued on bond returns) paid by a surviving agent with previous ability state $n$ and previous wealth $s$ is
	\begin{align*}
		T_\mathrm{K}(s,n)&\coloneqq\frac{\tau_\mathrm{K}}{1-\tau_\mathrm{K}}(\beta/\upsilon)\theta^\ast_ns\sum_{n'=1}^N\pi_{nn'}r_{n'}(\omega).
	\end{align*}
	The surviving agents constitute fraction $\upsilon$ of all agents, so the total capital income tax paid by all surviving agents is $\upsilon\E(T_\mathrm{K}(S,J))$, where $(S,J)$ is a random draw from the stationary joint distribution of wealth and ability.
	
	We have deduced that aggregate tax revenue in stationary equilibrium is the sum of $T_\mathrm{L}$, $\E(T_\mathrm{C}(S))$ and $\upsilon\E(T_\mathrm{K}(S,J))$. The following result provides convenient formulae for computing $\E(T_\mathrm{C}(S))$ and $\upsilon\E(T_\mathrm{K}(S,J))$. We require one piece of additional notation: let $r$ denote the $N\times1$ vector with $n$th entry $r_n(\omega)$.
	
	\begin{prop}\label{prop:tax}
		If $(S,J)$ is a random draw from the stationary joint distribution of wealth and ability, and if $\rho(\A(1))<1$, then
		\begin{align*}
			\E(T_\mathrm{C}(S))&=\frac{\tau_\mathrm{C}}{1+\tau_\mathrm{C}}(1-\beta)(1-\upsilon)h\varpi^\top(\mathrm{I}-\A(1))^{-1}1_N,\\
			\upsilon\E(T_\mathrm{K}(S,J))&=\frac{\tau_\mathrm{K}}{1-\tau_\mathrm{K}}\beta(1-\upsilon)h\varpi^\top(\mathrm{I}-\A(1))^{-1}(\Pi r\odot\theta^\ast).
		\end{align*}
	\end{prop}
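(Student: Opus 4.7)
The plan is to reduce both formulae to applications of Proposition \ref{prop:stationary} with $z=1$, which is permitted because the hypothesis $\rho(\A(1))<1$ is exactly the statement that $1\in\mathcal I_-$, so $\mathrm{I}-\A(1)$ is invertible and equations \eqref{eq:conditionalwealthMellin} and \eqref{eq:wealthMellin} are valid at $z=1$.

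For the consumption tax, the formula $T_\mathrm{C}(s)=\tfrac{\tau_\mathrm{C}}{1+\tau_\mathrm{C}}(1-\beta)s$ is linear in $s$, so
\begin{equation*}
\E(T_\mathrm{C}(S))=\frac{\tau_\mathrm{C}}{1+\tau_\mathrm{C}}(1-\beta)\E(S),
\end{equation*}
and the target expression follows immediately from the unconditional Mellin formula \eqref{eq:wealthMellin} evaluated at $z=1$.

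For the capital income tax, I would first rewrite
\begin{equation*}
\upsilon\E(T_\mathrm{K}(S,J))=\frac{\tau_\mathrm{K}}{1-\tau_\mathrm{K}}\beta\,\E\bigl(\theta_{J}^\ast\,(\Pi r)_J\,S\bigr),
\end{equation*}
where $(\Pi r)_n=\sum_{n'}\pi_{nn'}r_{n'}(\omega)$, and then condition on $J$:
\begin{equation*}
\E\bigl(\theta_J^\ast(\Pi r)_J S\bigr)=\sum_{n\in\mathcal N}p_n\,\theta_n^\ast(\Pi r)_n\E(S\mid J=n).
\end{equation*}
Substituting the conditional Mellin formula \eqref{eq:conditionalwealthMellin} at $z=1$, the factor $p_n$ cancels $p_n^{-1}$, and the remaining sum telescopes into the vector identity
\begin{equation*}
\sum_{n}\theta_n^\ast(\Pi r)_n\,\varpi^\top(\mathrm{I}-\A(1))^{-1}e^{(n)}=\varpi^\top(\mathrm{I}-\A(1))^{-1}(\Pi r\odot\theta^\ast),
\end{equation*}
delivering the claimed expression after multiplying through by $\tfrac{\tau_\mathrm{K}}{1-\tau_\mathrm{K}}\beta(1-\upsilon)h$.

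There is no real obstacle here: the whole argument is a bookkeeping exercise that exploits the linearity of $T_\mathrm{C}$ and $T_\mathrm{K}(\cdot,n)$ in $s$, the fact that the optimal portfolio weight $\theta_n^\ast$ depends only on the previous ability state (so that it pairs naturally with the conditional first moment of $S$ given $J=n$), and the matrix-vector identity that repackages a weighted sum of columns of $(\mathrm{I}-\A(1))^{-1}$ as a Hadamard product. The only point that warrants a brief check is that the states $n$ with $p_n=0$ contribute nothing to either sum, so that the use of \eqref{eq:conditionalwealthMellin}, which is stated for $p_n>0$, is unproblematic; those states can either be dropped from the sum or handled by noting that under stationarity they are visited with probability zero, so $\E(\mathbf 1\{J=n\}S)=0$ in that case.
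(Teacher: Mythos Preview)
Your proposal is correct and follows essentially the same route as the paper's own proof: both parts are handled by invoking Proposition~\ref{prop:stationary} at $z=1$, using linearity in $s$ for the consumption tax and conditioning on $J$ (with the $p_n$/$p_n^{-1}$ cancellation) for the capital income tax before repackaging the sum as the Hadamard product $\Pi r\odot\theta^\ast$. Your remark on the $p_n=0$ states is a nice extra check that the paper leaves implicit.
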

	
	\section{Numerical calibration}\label{sec:numerical}
	
	\subsection{Choice of parameters}\label{sec:calibration}
	
	To obtain numerical predictions from our model we calibrate it to match key features of the U.S.\ economy, with each period understood to be a year in duration. The calibration mostly follows \citet{Angeletos2007} where possible, although there are additional parameters to specify capturing the taxation and mortality rates and the dynamic behavior of entrepreneurial ability.
	
	Following \citet{Angeletos2007} we assume that all production functions are Cobb-Douglas with capital share parameter $\alpha$ equal to $0.36$, and we set the depreciation rate $\delta$ equal to $0.08$, the discount factor $\beta$ equal to $0.96$, and the relative risk aversion $\gamma$ equal to $3$. We set the survival probability $\upsilon$ equal to $0.975$ so that the average lifespan  of an agent (to be interpreted as the average length of economic life) is $(1-\upsilon)^{-1}=40$ years. We set the labor income tax rate $\tau_\mathrm{L}$ equal to $0.25$, which is the current average rate (net cash transfers) for a single worker in the U.S.\ earning the average wage, as reported in \citet[p.~636]{OECD2023}. Following \cite{AokiNirei2017} we define the capital income tax rate $\tau_\mathrm{K}$ by the equality $1-\tau_\mathrm{K}=(1-\tau^\mathrm{cap})(1-\tau^\mathrm{corp})$, where $\tau^\mathrm{cap}$ and $\tau^\mathrm{corp}$ are the capital gains and corporate tax rates. We set $\tau^\mathrm{cap}$ equal to $0.238$ (the current highest rate for long-term capital gains) and $\tau^\mathrm{corp}$ equal to $0.21$ (the current U.S. rate since the Tax Cuts and Jobs Act of 2017), resulting in $\tau_\mathrm{K}=0.4$. Figure \ref{fig:taxrates} plots U.S.\ historical capital gains,\footnote{\url{https://taxfoundation.org/data/all/federal/federal-capital-gains-tax-collections-historical-data}.} corporate\footnote{\url{https://taxfoundation.org/data/all/federal/historical-corporate-tax-rates-brackets}.} and effective capital income tax rates since 1954, as reported by the Tax Foundation. We consider a positive consumption tax rate $\tau_\mathrm{C}$ in analysis to follow, but for now set it equal to zero.
	
	\begin{figure}
		\centering
		\includegraphics[width=\linewidth]{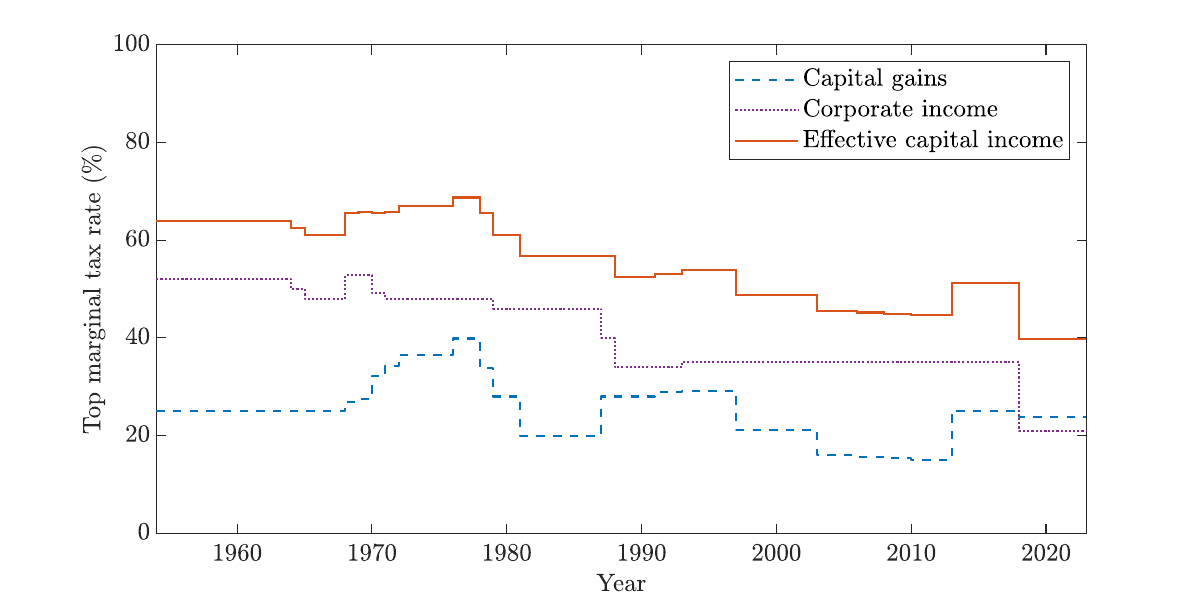}
		\caption{U.S.\ historical capital income tax rates.}\label{fig:taxrates}
	\end{figure}
	
	We model the state-dependent total factor productivity $A_n$ of Cobb-Douglas production as follows. First, we suppose that an agent has zero productive capacity in the first entrepreneurial ability state ($A_1=0$, interpreted as a pure worker) and positive productive capacity in other entrepreneurial ability states ($A_n>0$, interpreted as an entrepreneur). The transition probability matrix for occupation (worker and entrepreneur) is denoted by
	\begin{equation*}
		\Pi_\mathrm{o}=\begin{bmatrix}
			1-\pi_\mathrm{we} & \pi_\mathrm{we}\\
			\pi_\mathrm{ew} & 1-\pi_\mathrm{ew}
		\end{bmatrix},
	\end{equation*}
	where $\pi_\mathrm{we}$ is the transition probability from worker to entrepreneur and $\pi_\mathrm{ew}$ is the transition probability from entrepreneur to worker. We interpret the transition from entrepreneur to worker as firm exit and set $\pi_\mathrm{ew}$ equal to $0.02$ based on \citet[Table 1]{GilchristYankovZakrajsek2009}, where it is documented that the average credit spread for large non-financial firms is 192 basis points (1.92\%). We set $\pi_\mathrm{we}$ equal to $0.0026$ so that the fraction of entrepreneurs in the economy is $\pi_\mathrm{we}/(\pi_\mathrm{we}+\pi_\mathrm{ew})=0.115$, which is the fraction of ``active business owners'' reported in \citet[Table 1]{CagettiDeNardi2006}.
	
	Second, conditional on being an entrepreneur, we assume that productivity can take five values, independently and identically distributed over time. We choose the five values ($A_2,\dots,A_6$) and their probabilities (a $5\times 1$ vector $p_A$) to match specific values of the first four moments of the distribution of entrepreneurial productivity. We normalize the mean to be zero, i.e.\ $\E(\log A)=0$. We set the standard deviation $\sigma$ equal to $0.247$, the skewness equal to $-0.08$, and the kurtosis equal to $6.22$. These numbers are the moments of risky financial asset returns calculated in \citet[Table 3]{FagerengGuisoMalacrinoPistaferri2020} as part of a detailed analysis of Norwegian administrative tax data. We follow \cite{GuvenenKambourovKuruscuOcampoChen2023} and \cite{GerritsenJacobsRusuSpiritus2020} in basing our calibration of U.S.\ asset return heterogeneity on this analysis; no comparably detailed analysis using U.S.\ data is available. The standard deviation of $0.247$ is somewhat larger than the value of $0.2$ used in \cite{Angeletos2007}, but it is noted there that substantial uncertainty surrounds the choice of this parameter. In any case, we devote considerable attention to what happens when $\sigma$ is varied in the analysis to follow, and also to variation in the risk aversion parameter $\gamma$. To match the four specified moments we space $\log(A_2),\dots,\log(A_6)$ evenly between $\pm\sqrt{10}\sigma$ and then choose the vector of probabilities $p_A$ to maximize entropy subject to matching the four moments, consistent with the general procedure described in \cite{TanakaToda2013}.
	
	We complete our specification of the dynamic behavior of entrepreneurial ability by defining the combined transition probability matrix on the state space $\cN=\{1,\dots,N\}$ (with $N=1+5=6$) by the equation
	\begin{equation*}
		\Pi=\begin{bmatrix}
			1-\pi_\mathrm{we} & \pi_\mathrm{we}p_A^\top\\
			\pi_\mathrm{ew}1_{N-1} & (1-\pi_\mathrm{ew})1_{N-1}p_A^\top
		\end{bmatrix}.
	\end{equation*}
	We set the initial productivity distribution $\varpi$ equal to the stationary distribution for $\Pi$. Table \ref{t:param} summarizes the model parameters.
	\begin{table}
		\caption{Model parameters.}\label{t:param}
		\begin{tabular*}{\textwidth}{@{\extracolsep\fill}lcll}
			\toprule
			\textbf{Description} & \textbf{Notation} & \textbf{Value} & \textbf{Source}\\
			\midrule
			Capital share & $\alpha$ & 0.36 & \citet{Angeletos2007} \\
			Capital depreciation & $\delta$ & 0.08 & \citet{Angeletos2007} \\
			Discount factor & $\beta$ & 0.96 & \citet{Angeletos2007} \\
			Relative risk aversion & $\gamma$ & 3 & \citet{Angeletos2007} \\
			Survival probability & $\upsilon$ & 0.975 & Avg.\ 40 year economic life \\
			Capital income tax rate & $\tau_\mathrm{K}$ & 0.398 & Tax Foundation \\
			Labor income tax rate & $\tau_\mathrm{L}$ & 0.248 & \cite{OECD2023} \\
			$\Pr(\text{entrepreneur}\to\text{worker})$ & $\pi_\mathrm{ew}$ & 0.0192 & \citet{GilchristYankovZakrajsek2009} \\
			Fraction of entrepreneurs & $\frac{\pi_\mathrm{we}}{\pi_\mathrm{we}+\pi_\mathrm{ew}}$ & 0.115 & \citet{CagettiDeNardi2006} \\
			Volatility of productivity & $\sigma$ & 0.2473 & \citet{FagerengGuisoMalacrinoPistaferri2020} \\
			Skewness of productivity & -- & -0.08 & \citet{FagerengGuisoMalacrinoPistaferri2020} \\
			Kurtosis of productivity & -- & 6.22 & \citet{FagerengGuisoMalacrinoPistaferri2020} \\
			\bottomrule
		\end{tabular*}
	\end{table}
	
	\subsection{Stationary equilibrium wealth distribution}
	
	We compute the stationary equilibrium by using the aggregate demand formulae in Proposition \ref{prop:equilibrium} to numerically solve the equilibrium conditions \eqref{eq:bondclear} and \eqref{eq:laborclear} for the equilibrium prices $R$ and $\omega$. The equilibrium post-tax risk-free interest rate $R-1$ is 1.7\% (meaning that the pre-tax risk-free interest rate is 2.9\%) and the equilibrium pre-tax wage $\omega$ is $1.27$. We find that the unique positive solution to $\rho(\A(z))=1$ is $z=1.93$. It therefore follows from Proposition \ref{prop:stationary} that the upper tail of the stationary equilibrium distribution of total wealth $S$ (and thus also financial wealth $W$) is Pareto with exponent $\zeta=1.93$.
	
	We use Fourier inversion to compute the stationary equilibrium distribution of wealth in our economy. Specifically, we numerically evaluate the integral in the Gil-Pelaez inversion formula
	\begin{align*}
		\mathrm{Pr}(\log S\leq y)&=\frac{1}{2}-\frac{1}{\pi}\int_0^\infty\operatorname{Im}\left(t^{-1}e^{-ity}\E(S^{it})\right)\mathrm{d}t
	\end{align*}
	using the procedure described in \cite{Witkovsky2016}.\footnote{We use the \textsc{Matlab} routine \texttt{cf2DistGP.m} available at \url{https://github.com/witkovsky/CharFunTool}.} A formula for the Fourier transform $\E(S^{it})$ for the stationary equilibrium distribution of log-wealth can be obtained from Proposition \ref{prop:stationary} by confining the argument $z$ of the Mellin transform $\E(S^{z})$ to the imaginary line. We found that this numerical calculation of probabilities provides much greater accuracy at high wealth levels (up to around the top $10^{-6}$ quantile) than can be achieved using Monte Carlo simulation with comparable runtime. None of the results reported in this article were computed using Monte Carlo simulation.
	
	In Figure \ref{fig:tailProb} we plot the stationary equilibrium distribution of financial wealth computed by Fourier inversion up to a wealth level of $10^4$, which is approximately the top $10^{-6}$ quantile of wealth. Financial wealth $W$ is equal to total wealth $S$ minus human wealth $h$, where we compute $h=22.9$. At wealth levels higher than $10^4$ it becomes burdensome to accurately compute probabilities by Fourier inversion. However, we know from Proposition \ref{prop:stationary} that the right tail of the wealth distribution is Pareto with exponent $\zeta=1.93$. We use this fact to extrapolate our computed probabilities out to a wealth level of $10^6$, which is above the top $10^{-9}$ quantile of wealth. The extrapolation, which is affine in the log-log scale used in Figure \ref{fig:tailProb}, seamlessly extends the probabilities computed by Fourier inversion. In \cite{GouinBonenfantToda2023} the same procedure is used to extrapolate a computed wealth distribution satisfying a known Pareto law, although there the Fourier transform for the distribution of log-wealth is not available in closed form so a method different from Fourier inversion is used to compute the body of the distribution.
	
	\begin{figure}
		\centering
		\begin{subfigure}{0.48\linewidth}
			\includegraphics[width=\linewidth]{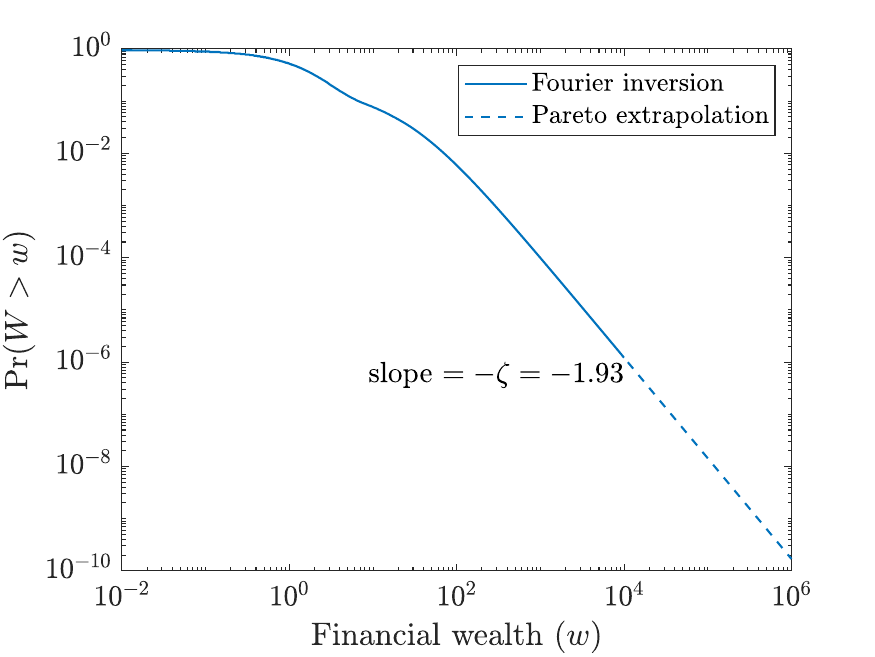}
			\caption{Exceedance probabilities.}\label{fig:tailProb}
		\end{subfigure}
		\begin{subfigure}{0.48\linewidth}
			\includegraphics[width=\linewidth]{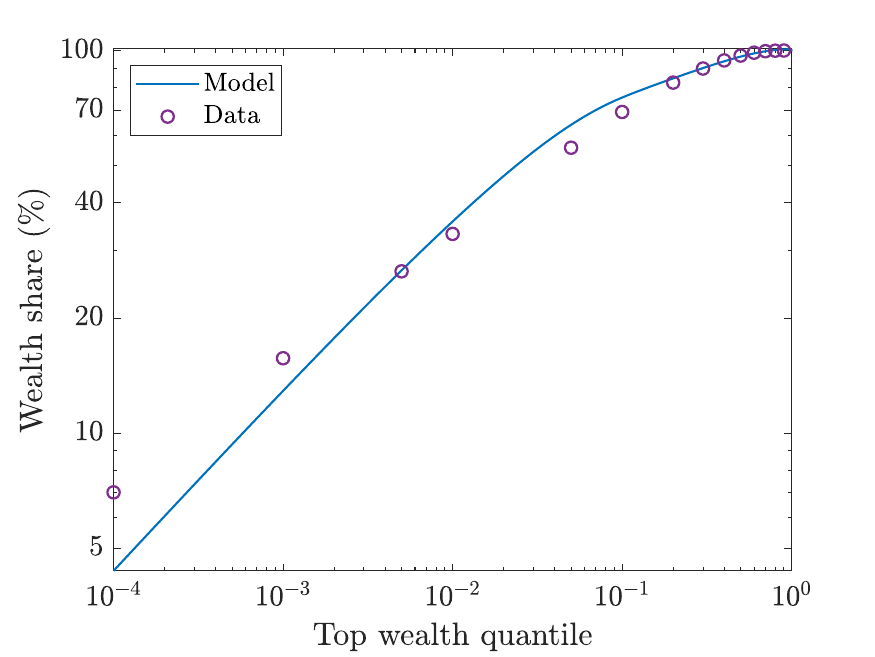}
			\caption{Wealth shares.}\label{fig:wealthShare}
		\end{subfigure}
		\caption{Stationary equilibrium distribution of financial wealth.}\label{fig:wealth}
	\end{figure}
	
	While the total wealth $S$ of an agent is guaranteed to be nonnegative, their financial wealth $W$ can in principle be negative: agents may, by borrowing at the risk-free rate, reduce their financial wealth to $R\ubar{b}=-22.9$. Only a small fraction of the agents in our model have negative financial wealth. It is not visible in Figure \ref{fig:tailProb}, but 2.5\% of agents have zero financial wealth (the newborn agents), and a further 3\% of agents have negative financial wealth.
	
	To assess the extent to which the stationary equilibrium distribution of financial wealth in our calibrated model resembles the distribution of wealth in the U.S.\ economy, we computed a range of wealth shares in our model and compared them to the corresponding values for the U.S.\ household wealth distribution in 2001.\footnote{We use the 2001 U.S.\ household wealth distribution as wealth shares are available for many quantiles: Table 5 of \citet{DaviesSandstromShorrocksWolff2011} reports the wealth shares of the top 1\%, 5\%, and all deciles; Table B1 on pp.~241--243 of the online appendix to \citet{saez-zucman2016} reports the wealth shares of the top 0.01\%, 0.1\%, 0.5\%, 1\%, 5\%, and 10\%. We use the numbers in \citet{saez-zucman2016} for the top 1\%, 5\% and 10\% wealth shares, which are very close to those in \citet{DaviesSandstromShorrocksWolff2011}.} Figure \ref{fig:wealthShare} and Table \ref{t:wealthShare} provide, respectively, graphical and numerical comparisons of the wealth shares in our model and in the data. Despite the simplicity of our model, the wealth shares it implies are similar to those in the data, except at the very highest levels of wealth. Our model slightly overestimates the top 1-10\% wealth shares and underestimates the top 0.1\% and 0.01\% wealth shares. We see that in our model, as in the data, the top 1\% hold about one third of wealth and the bottom 50\% hold little wealth.
	
	\begin{table}
		\small\centering
		\caption{Empirical and model-implied wealth shares (\%).}\label{t:wealthShare}
		\begin{tabular*}{\textwidth}{@{\extracolsep\fill}lrrlrr}
			\toprule
			\textbf{Wealth group} & \textbf{Data} & \textbf{Model} & \textbf{Wealth group} & \textbf{Data} & \textbf{Model} \\
			\midrule
			Top 0.01\%& 7.0& 4.4 & Bottom 90\% & 30.8 & 24.6\\
			Top 0.1\%& 15.7& 12.9 & Bottom 80\% & 17.4 & 15.3\\
			Top 0.5\%& 26.5& 26.6 & Bottom 70\% & 10.1 & 9.9\\
			Top 1\%& 33.2& 35.7 & Bottom 60\% & 5.6 & 6.2\\
			Top 5\%& 55.8& 64.0 & Bottom 50\% & 2.8 & 3.5\\
			Top 10\%& 69.2& 75.4 & Bottom 40\% & 1.1 & 1.6\\
			& & & Bottom 30\% & 0.2 & 0.3\\
			& & & Bottom 20\% & -0.1 & -0.5\\
			& & & Bottom 10\% & -0.2 & -0.9\\
			\bottomrule
		\end{tabular*}
	\end{table}
	
	The log-log plot of top wealth shares in Figure \ref{fig:wealthShare} shows a straight-line pattern for the top 1\%, both in our model and in the data. However, the data suggest a flatter slope than is implied by our model. If the wealth distribution has a Pareto upper tail with exponent $\zeta$, then it is straightforward to show that the slope of this log-log plot approaches $1-1/\zeta$ as the tail probability approaches zero; see, for instance, \citet[Eq.~4.3]{GouinBonenfantToda2023}. We therefore deduce from Figure \ref{fig:wealthShare} that our model-implied Pareto exponent of $1.93$ is too large to match the very top empirical wealth shares for the year 2001. To investigate whether the same is true for other years, in Figure \ref{fig:alphaHat} we plot estimates of the Pareto exponent in the U.S.\ for each year from 1913 to 2019 obtained by applying the minimum distance estimator introduced in \citet{TodaWang2021JAE} to the top 0.01\%, 0.1\%, 0.5\%, and 1\% wealth shares updated from \citet{PikettySaezZucman2018}.\footnote{Specifically, we obtain the wealth share series from \url{https://gabriel-zucman.eu/usdina} (Tables II: distributional series, tabs \texttt{TE2b} and \texttt{TE2c}) and apply the \textsc{Matlab} routine \texttt{ParetoCUMDE.m} available at \url{http://qed.econ.queensu.ca/jae/datasets/toda002}.} Estimates vary between 1 and 2, with an estimate of 1.47 for 2019. Thus our model-implied Pareto exponent of $1.93$ is on the high end of the empirical range. It roughly matches the estimated Pareto exponent for 1980, proceeded by sharply declining estimates throughout the 1980s.
	
	\begin{figure}
		\centering
		\includegraphics[width=\linewidth]{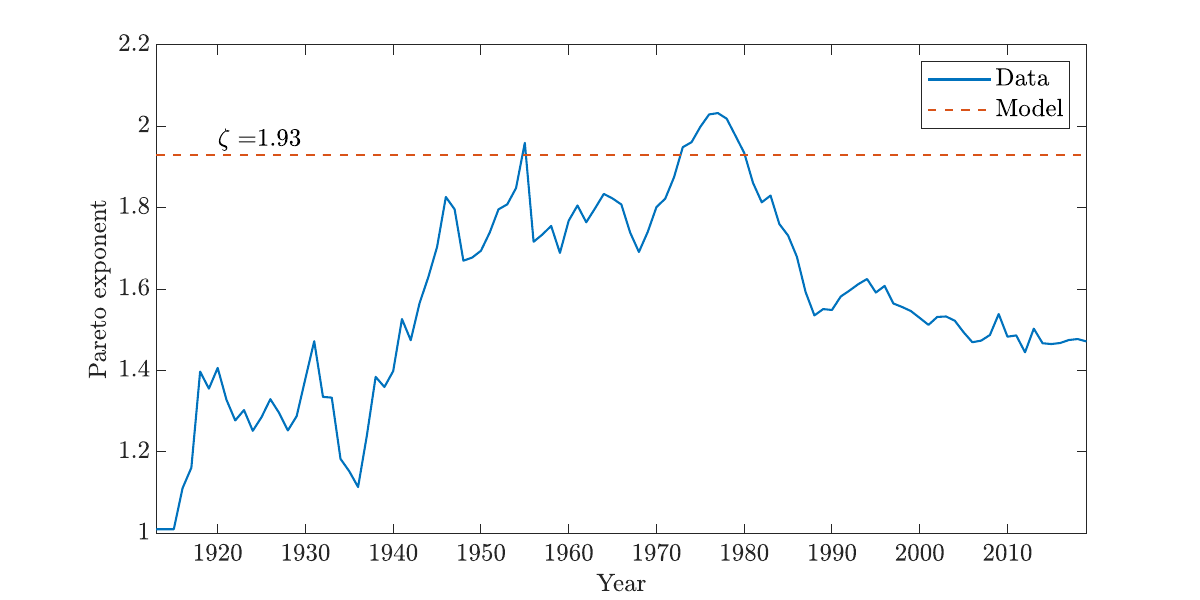}
		\caption{Estimated Pareto exponents for the U.S.\ household wealth distribution.}\label{fig:alphaHat}
	\end{figure}
	
	While our model-implied Pareto exponent is too high to match the very top wealth shares in current U.S.\ data, this ought not to draw into question the general conclusions of this article, which are that we should tax consumption rather than labor income, and probably also tax capital income. Indeed, with a smaller model-implied Pareto exponent, these conclusions would be strengthened. If a greater proportion of wealth were held by agents whose income is overwhelmingly derived from capital, then the taxation of labor income would be even less effective in drawing revenue from the top end of the wealth distribution. Though imperfect, our model succeeds in capturing the straight-line pattern of top wealth shares evident in Figure \ref{fig:wealthShare}, and correctly predicts that a significant fraction of wealth is held by agents well within the top 1\% of the wealth distribution. As shown in \cite{StachurskiToda2019}, economic models in which the sole source of heterogeneity is idiosyncratic variation in labor income do not typically generate a power law in the upper tail of the wealth distribution, and consequently fail to match the extreme degree of wealth concentration observed in data. It is the interaction of random multiplicative returns to capital with the random replacement of agents which, in our model, generates such a power law. Other models sharing this feature have been used to study wealth inequality in \cite{NireiAoki2016}, \cite{AokiNirei2017} and \cite{CaoLuo2017}. For a general survey of the literature on power laws in economics and finance, see \cite{Gabaix2009}.
	
	We mentioned in Section \ref{sec:equilibrium} that conditions ensuring the existence and uniqueness of prices solving the equilibrium conditions for the bond and labor markets are unavailable. In the numerical calibration under current discussion, and in variations to it discussed in the remainder of this article, we were able to identify equilibrium prices which are, as far as we can tell, unique. However, in unreported analysis, we were able to identify parameter configurations under which no equilibrium exists. Equilibrium nonexistence occurs when the probability of transitioning from entrepreneur to worker is too high. In this case entrepreneurs are unwilling to hold capital, fearing the large capital loss brought about by the depreciation of unutilized capital which will eventuate if they transition to the worker type. This issue is not problematic for our analysis as the specified transition rate is well below the level that would lead to equilibrium nonexistence, but serves to illustrate the impossibility of providing a general assertion of equilibrium existence in our model without imposing further technical conditions.
	
	\section{Optimal taxation}\label{sec:opttax}
	
	In this section we investigate the welfare implications of varying the labor income, capital income and consumption tax rates from the baseline rates of $0.25$, $0.4$ and $0$ used in the numerically calibrated model described in Section \ref{sec:numerical}. We only consider combinations of tax rates which generate the same aggregate tax revenue in stationary equilibrium as is achieved with the baseline rates. Proposition \ref{prop:tax} allows us to easily compute the aggregate tax revenue generated by any combination of tax rates, and thereby confine ourselves to combinations which generate the same aggregate tax revenue as the baseline rates.
	
	\subsection{Optimal taxation without a consumption tax}\label{sec:opttaxnocon}
	
	\begin{figure}
		\centering
		\begin{subfigure}{0.48\linewidth}
			\includegraphics[width=\linewidth]{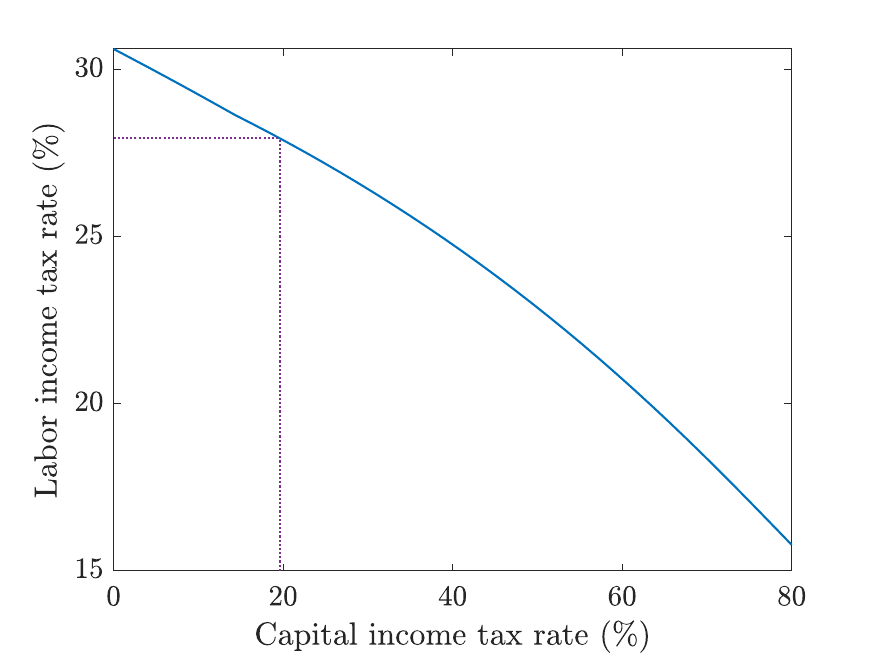}
			\caption{Revenue-preserving tax mixes.}\label{fig:tauK_tauL}
		\end{subfigure}
		\begin{subfigure}{0.48\linewidth}
			\includegraphics[width=\linewidth]{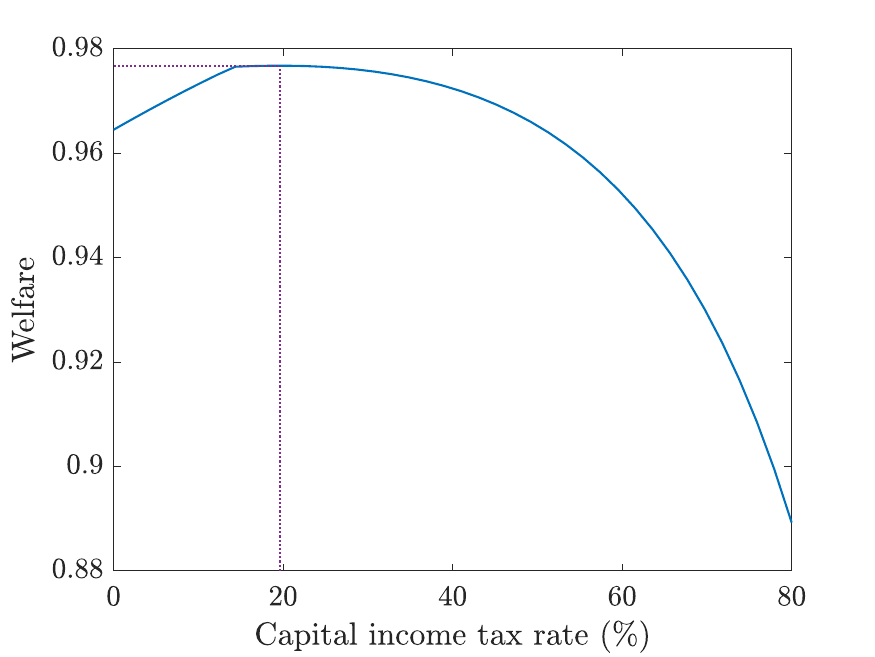}
			\caption{Welfare.}\label{fig:tauK_W}
		\end{subfigure}
		\begin{subfigure}{0.48\linewidth}
			\includegraphics[width=\linewidth]{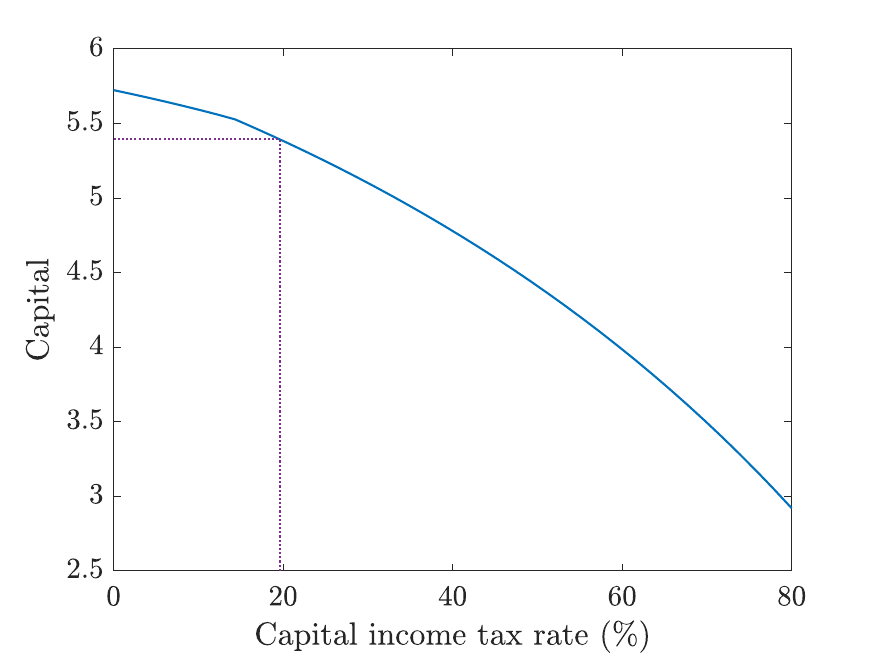}
			\caption{Aggregate capital.}\label{fig:tauK_AggW}
		\end{subfigure}
		\begin{subfigure}{0.48\linewidth}
			\includegraphics[width=\linewidth]{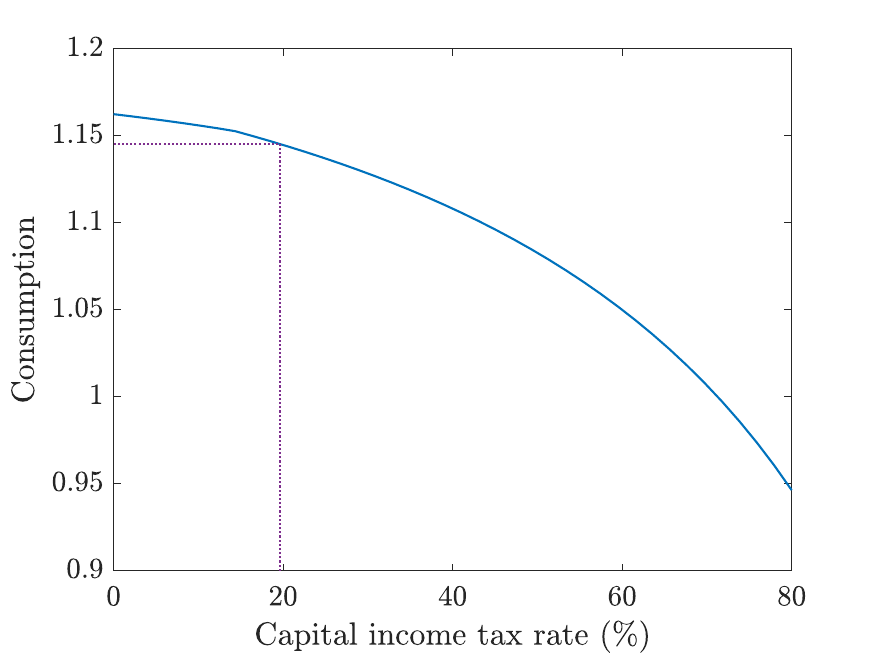}
			\caption{Aggregate consumption.}\label{fig:tauK_AggC}
		\end{subfigure}
		\begin{subfigure}{0.48\linewidth}
			\includegraphics[width=\linewidth]{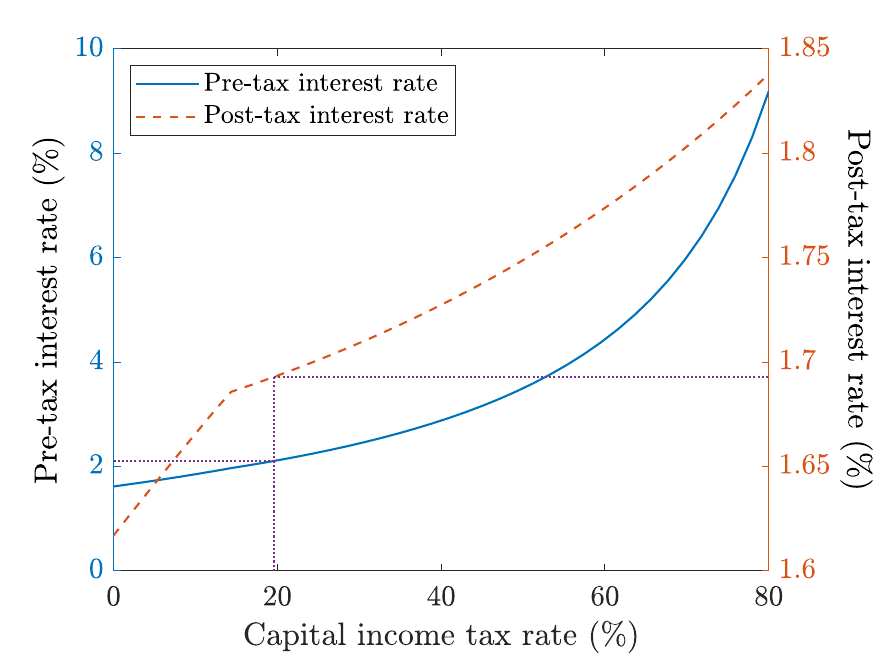}
			\caption{Interest rate.}\label{fig:tauK_R}
		\end{subfigure}
		\begin{subfigure}{0.48\linewidth}
			\includegraphics[width=\linewidth]{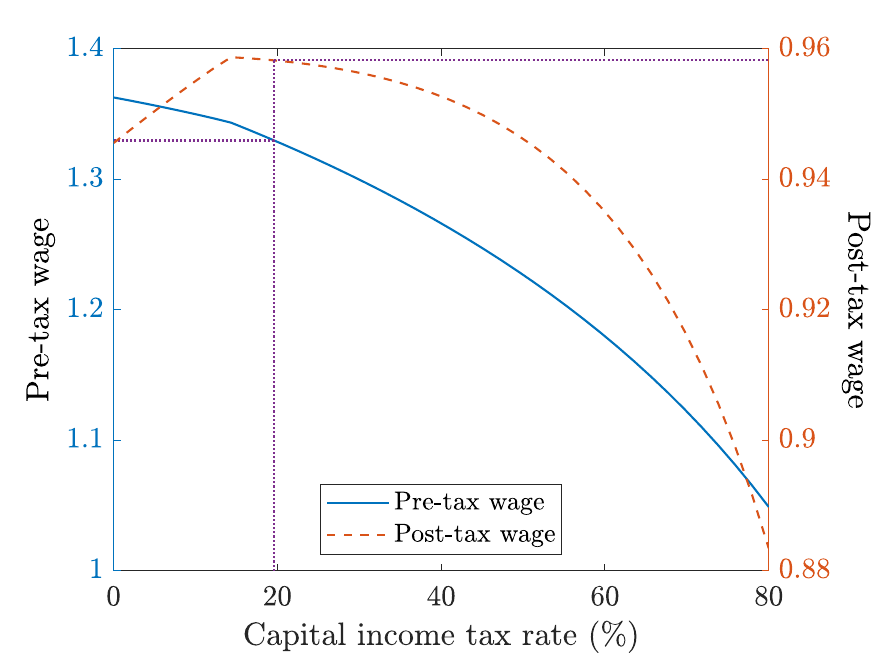}
			\caption{Wage.}\label{fig:tauK_omega}
		\end{subfigure}
		\caption{Effect of varying labor and capital income tax rates (no consumption tax).}\label{fig:tauK}
	\end{figure}
	
	Partly for expository purposes, and partly because it is interesting in its own right, we begin by exploring variations in the rates of labor and capital income taxation while holding the rate of consumption taxation equal to zero. Figure \ref{fig:tauK_tauL} displays the combinations of labor and capital income tax rates (with the latter constrained to fall between $0$ and $0.8$) which generate the same aggregate tax revenue as the baseline rates of $0.25$ and $0.4$. The revenue-preserving rate of labor income taxation declines monotonically with the rate of capital income taxation. Increasing the rate of labor income taxation to a little more than $0.3$ would allow the capital income tax to be completely eliminated, as has been advocated in parts of the literature \citep[see e.g.][]{Lucas1990,AtkesonChariKehoe1999}.
	
	In Figure \ref{fig:tauK_W} we plot the welfare (computed using Proposition \ref{prop:welfare}) that is obtained for a given rate of capital income taxation, with the rate of labor income taxation varying to preserve revenue as in Figure \ref{fig:tauK_tauL}. Welfare is hump-shaped, and maximized at a capital income tax rate of around $0.2$. The corresponding rate of labor income taxation is around $0.28$. Dotted lines in the four panels in Figure \ref{fig:tauK} identify these optimal rates of taxation.
	
	Two features of the welfare curve in Figure \ref{fig:tauK_W} deserve further comment. First, the welfare curve is \emph{very} flat around its maximum. It is visually indistinguishable from a horizontal line for capital income tax rates between around $0.14$ and $0.25$, and within 0.5\% of its maximum value for capital income tax rates between around $0.1$ and $0.41$. We calculated the welfare curve to a precision of ten decimal places and found it to be uniquely maximized with a capital income tax rate of around $0.2$, but for all practical purposes welfare achieves its maximum over a substantial range of capital income tax rates.
	
	The second notable feature of the welfare curve in Figure \ref{fig:tauK_W} is that it is kinked near the capital income tax rate of $0.14$. The natural borrowing constraint strictly binds on entrepreneurs to the right of the kink, is slack to the left of the kink, and is barely binding exactly at the kink. Equilibrium prices behave differently to the left and right of the kink, affecting welfare, as we discuss below. The Domar-Musgrave effect explains why the natural borrowing constraint binds on entrepreneurs only when the rate of capital income taxation is sufficiently high. With full offset provisions, a rate of capital income taxation in excess of $0.14$ mitigates the downside risk faced by entrepreneurs enough to induce them to fully leverage their capital investment. When the rate of capital income taxation falls below $0.14$, entrepreneurs reduce their leverage in an effort to limit their exposure to negative investment returns.
	
	Figures \ref{fig:tauK_AggW} and \ref{fig:tauK_AggC} show that aggregate capital and consumption are both decreasing in the capital income tax rate. It should not be surprising that the welfare maximizing rate of capital income taxation does not maximize aggregate consumption. The distributional impact of taxation policy, discussed further below, plays a critical role in the determination of welfare. It is a little difficult to see, but the aggregate capital and consumption curves are both kinked at the same location as the welfare curve. While aggregate capital continues to rise as the capital income tax rate falls below $0.14$, leverage is reduced, producing the kink in this curve.
	
	Figures \ref{fig:tauK_R} and \ref{fig:tauK_omega} show how the pre- and post-tax equilibrium interest rate and wage vary with the rate of capital income taxation. The curves plotted are kinked as in the panels above, although this is difficult to see in the pre-tax case. We see in Figure \ref{fig:tauK_R} that the equilibrium interest rate decreases as the rate of capital income taxation is reduced, but at a faster rate to the left of the kink. When the rate of capital income taxation is reduced below $0.14$, entrepreneurs are less willing to take on debt, so the interest rate falls more quickly to maintain equilibrium in the bond market. In Figure \ref{fig:tauK_omega} we see that the equilibrium pre-tax wage is strictly decreasing in the capital income tax rate, whereas the equilibrium post-tax wage is increasing to the left of the kink and decreasing to the right. The reason for the monotonicity of the pre-tax wage is clear: as the capital income tax rate is increased, aggregate capital falls, reducing the marginal productivity of labor and thus the demand for labor. For the post-tax wage, this effect is offset by the fact that the labor income tax rate falls as the capital income tax rate increases. The latter effect dominates to the left of the kink because aggregate capital is less sensitive to the rate of capital income taxation in this region. The lower post-tax wage to the left of the kink explains the sharp decline in welfare in Figure \ref{fig:tauK_W} which occurs when the capital income tax rate is reduced below $0.14$.
	
	\begin{figure}[b]
		\centering
		\begin{subfigure}{0.48\linewidth}
			\includegraphics[width=\linewidth]{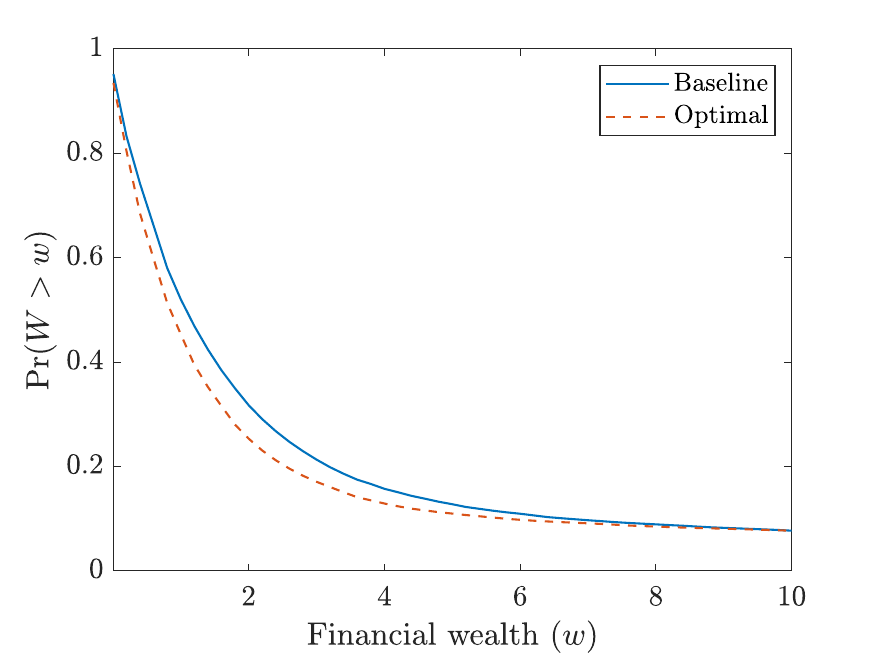}
			\caption{Exceedance probabilities for $w<10$.}\label{fig:W_opt_noC_body}
		\end{subfigure}
		\begin{subfigure}{0.48\linewidth}
			\includegraphics[width=\linewidth]{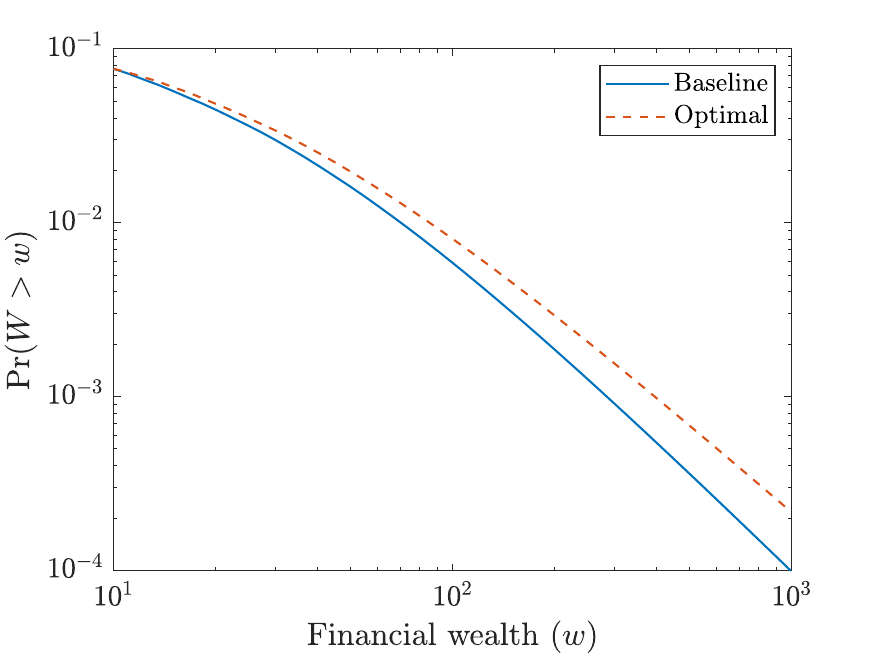}
			\caption{Exceedance probabilities for $w>10$.}\label{fig:W_opt_noC_tail}
		\end{subfigure}
		\caption{Baseline and optimal wealth distributions (consumption tax unavailable).}\label{fig:W_opt_noC}
	\end{figure}
	
	To provide further insight into the distributional changes brought about by changing the labor and capital income tax rates from the baseline rates of $0.25$ and $0.4$ to the optimal rates of $0.3$ and $0.2$, we display the corresponding distributions of financial wealth in Figure \ref{fig:W_opt_noC}, computed by Fourier inversion. The proportion of agents with financial wealth exceeding a threshold $w$ is plotted in Figure \ref{fig:W_opt_noC_body} for $w<10$ and in Figure \ref{fig:W_opt_noC_tail} for $w>10$, the latter plot in log-log scale. The curves for the two tax regimes cross near $w=10$, with around $8\%$ of agents having financial wealth greater than $10$ under either tax regime. Agents in the bottom $92\%$ of the wealth distribution are less wealthy under the optimal tax regime, while agents in the top $8\%$ of the wealth distribution are more wealthy. Shifting from the baseline tax rates to the optimal tax rates exacerbates wealth inequality and makes a small proportion of agents much wealthier at the expense of the majority.
	
	\begin{figure}[b]
		\centering
		\begin{subfigure}{0.48\linewidth}
			\includegraphics[width=\linewidth]{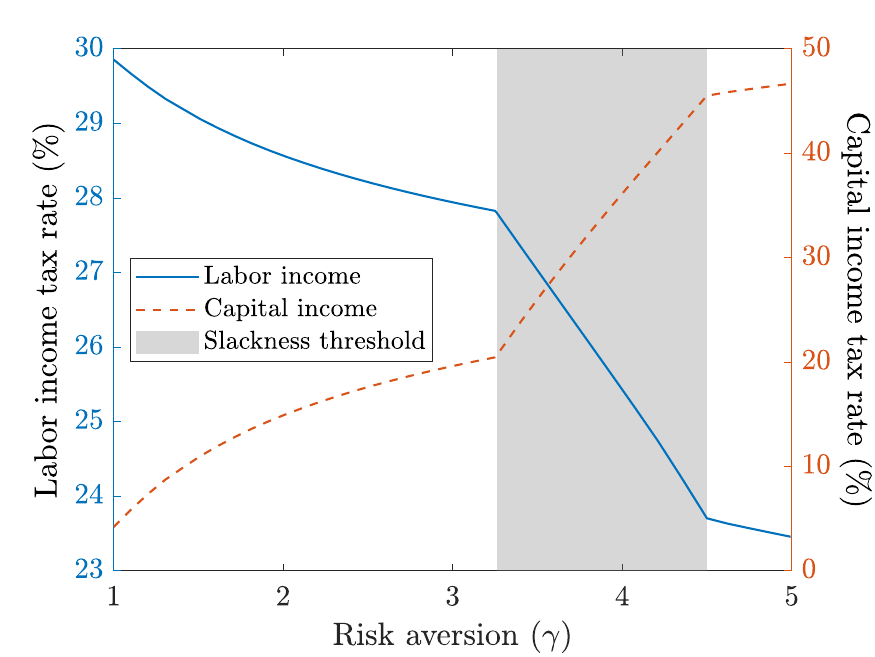}
			\caption{Optimal tax rates.}\label{fig:gamma_LK_tax}
		\end{subfigure}
		\begin{subfigure}{0.48\linewidth}
			\includegraphics[width=\linewidth]{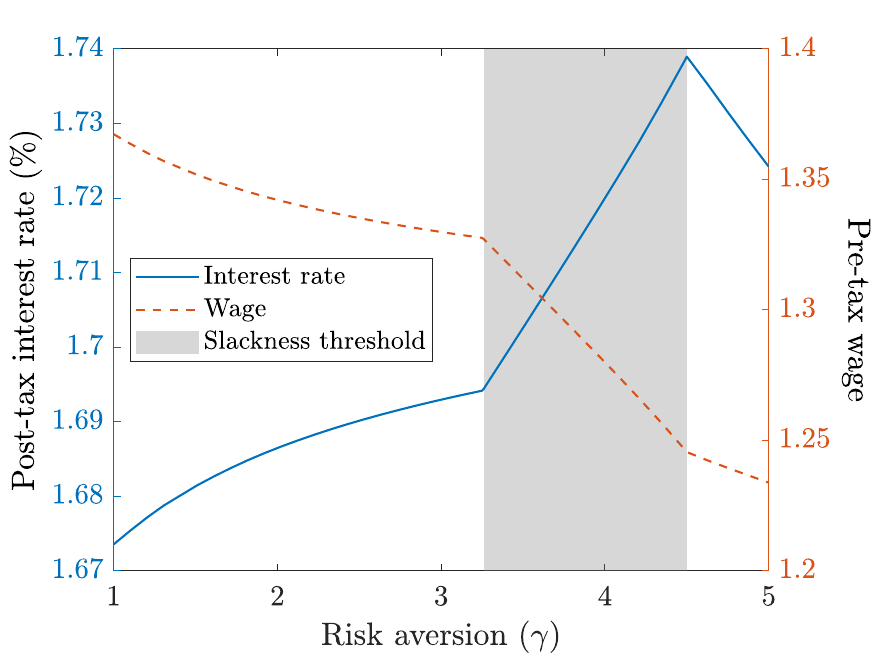}
			\caption{Equilibrium prices.}\label{fig:gamma_LK_prices}
		\end{subfigure}
		\caption{Effect of risk aversion on optimal taxes and equilibrium prices (no cons.\ tax).}\label{fig:riskaversion_LK}
	\end{figure}
	
	The risk aversion parameter $\gamma$ in our numerical calibration was set equal to $3$, following \cite{Angeletos2007}. In Figure \ref{fig:riskaversion_LK} we show how the optimal tax rates and corresponding equilibrium prices are affected as $\gamma$ varies between $1$ and $5$. A surprising phenomenon is apparent: there are now \emph{two} kinks in each of the curves plotted. The locations of the kinks divide the risk aversion parameter space into three regions. The natural borrowing constraint is strictly binding on entrepreneurs for $\gamma<3.26$ and is slack for $\gamma>4.48$. For $\gamma$ between these values the natural borrowing constraint is barely binding, meaning that it holds with equality but entrepreneurs would not increase their level of debt in its absence. We label the region of risk aversion levels where the natural borrowing constraint is barely binding the \emph{slackness threshold} in Figure \ref{fig:riskaversion_LK}. Within the slackness threshold, the optimal rate of capital income taxation rises rapidly with the level of risk aversion. Elsewhere it rises less rapidly. The fact that higher levels of risk aversion produce higher optimal rates of capital income taxation is a manifestation of the Domar-Musgrave effect: with greater risk aversion, agents more highly value the loss mitigation provided by capital income taxation. As $\gamma$ rises from $3.26$ to $4.48$, it is optimal to raise the rate of capital income taxation just enough to induce entrepreneurs to maintain full leverage of capital. The $0.4$ baseline rate of capital income taxation would be optimal if $\gamma=4.2$, a value well within the empirically plausible range of risk aversion levels.
	
	We present a similar sensitivity analysis for the volatility parameter $\sigma$ in Figure \ref{fig:volatility_LK}, fixing $\gamma=3$. The choice of $\sigma$ used in our baseline calibration is $0.247$, compared to $0.2$ in \cite{Angeletos2007}. As in Figure \ref{fig:riskaversion_LK}, we see that each of the curves plotted in Figure \ref{fig:volatility_LK} has two kinks. The natural borrowing constraint is strictly binding on entrepreneurs for $\sigma<0.25$, slack for $\sigma>0.27$, and barely binding between these values. As $\sigma$ rises from $0.25$ to $0.27$, the optimal capital income tax rate surges from $0.2$ to $0.43$. Therefore, raising our volatility parameter above the chosen value of $0.247$ by just a tiny amount leads the optimal rate of capital income taxation to more than double.
	
	\begin{figure}
		\centering
		\begin{subfigure}{0.48\linewidth}
			\includegraphics[width=\linewidth]{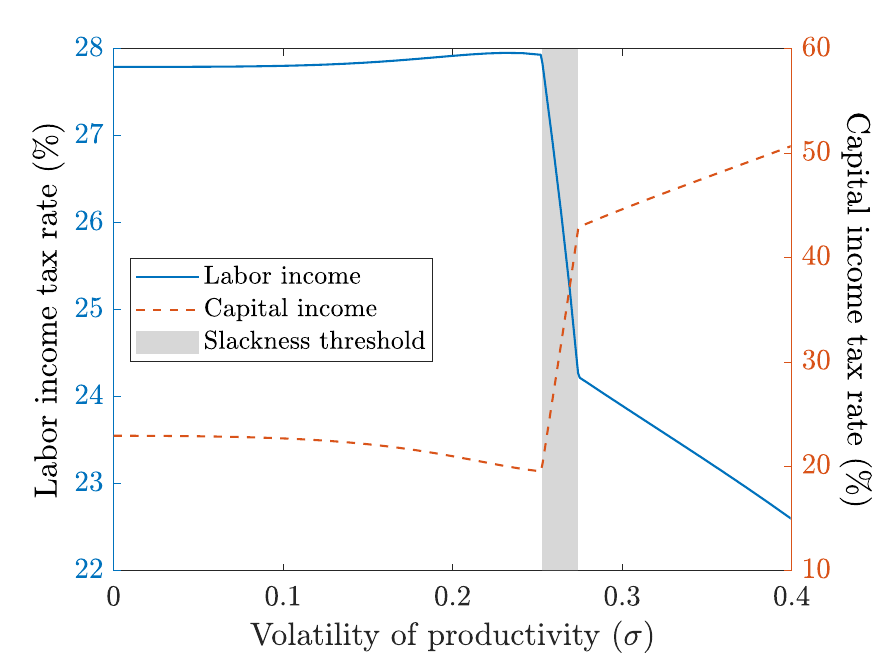}
			\caption{Optimal tax rates.}\label{fig:sigma_LK_tax}
		\end{subfigure}
		\begin{subfigure}{0.48\linewidth}
			\includegraphics[width=\linewidth]{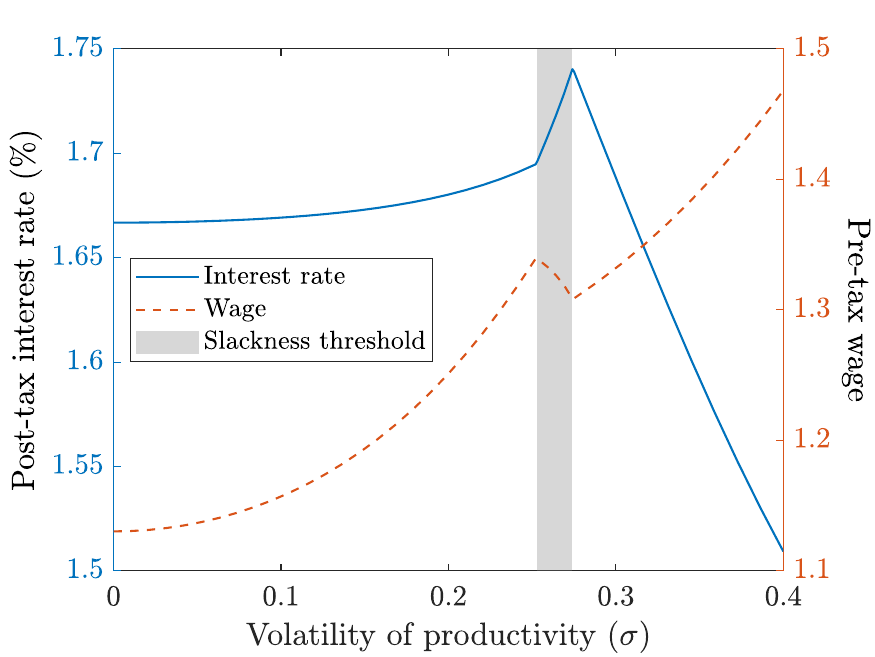}
			\caption{Equilibrium prices.}\label{fig:sigma_LK_prices}
		\end{subfigure}
		\caption{Effect of volatility on optimal taxes and equilibrium prices (no cons.\ tax).}\label{fig:volatility_LK}
	\end{figure}
	
	We summarize the results reported in this subsection as follows. In the absence of a consumption tax, our model does not provide a compelling justification for changing the existing balance between labor and capital income taxation. The optimal rates of taxation are highly sensitive to the assumed levels of risk aversion and return heterogeneity. Even ignoring this issue, changing from the baseline rates of taxation to the optimal rates would produce only a small increase in welfare of less than $0.5\%$ while substantially exacerbating wealth inequality. The interplay between the Domar-Musgrave effect and the natural borrowing constraint leads the optimal tax rates and equilibrium prices to behave in unexpected ways.

	\subsection{Optimal taxation with a consumption tax}\label{sec:opttaxcon}
	
	We now introduce a consumption tax to our model, and explore the welfare implications of varying the labor income, capital income and consumption tax rates while maintaining the aggregate tax revenue achieved in our baseline specification (without a consumption tax) described in Section \ref{sec:numerical}.
	
	The availability of a consumption tax introduces a third dimension to our analysis. In Figure \ref{fig:KL3d} we extend the two-dimensional plots provided in Figure \ref{fig:tauK} into this third dimension. Curves in two-dimensional space become surfaces in three-dimensional space. Figure \ref{fig:KLC} displays the combinations of labor income, capital income and consumption tax rates generating the target aggregate tax revenue. The lower edge of the surface plotted, where the consumption tax rate is zero, corresponds precisely to the pairs of labor and capital income tax rates plotted in Figure \ref{fig:tauK_tauL}. When the labor and/or capital income tax rates are reduced below the levels on this edge, the consumption tax rate rises so as to preserve aggregate tax revenue.
	
	\begin{figure}[!p]
		\centering
		\begin{subfigure}{0.45\linewidth}
			\includegraphics[width=\linewidth]{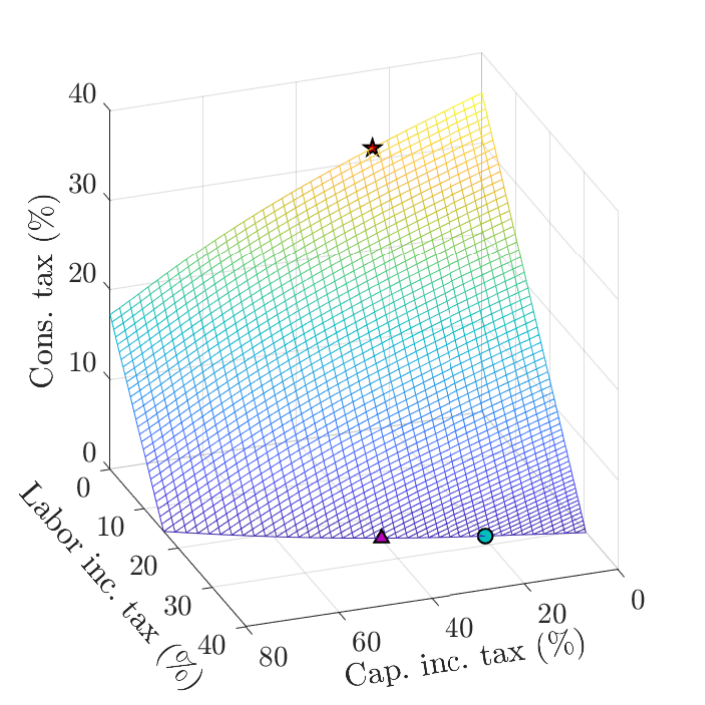}
			\caption{Revenue-preserving tax mixes.}\label{fig:KLC}
		\end{subfigure}
		\hspace{1cm}
		\begin{subfigure}{0.45\linewidth}
			\includegraphics[width=\linewidth]{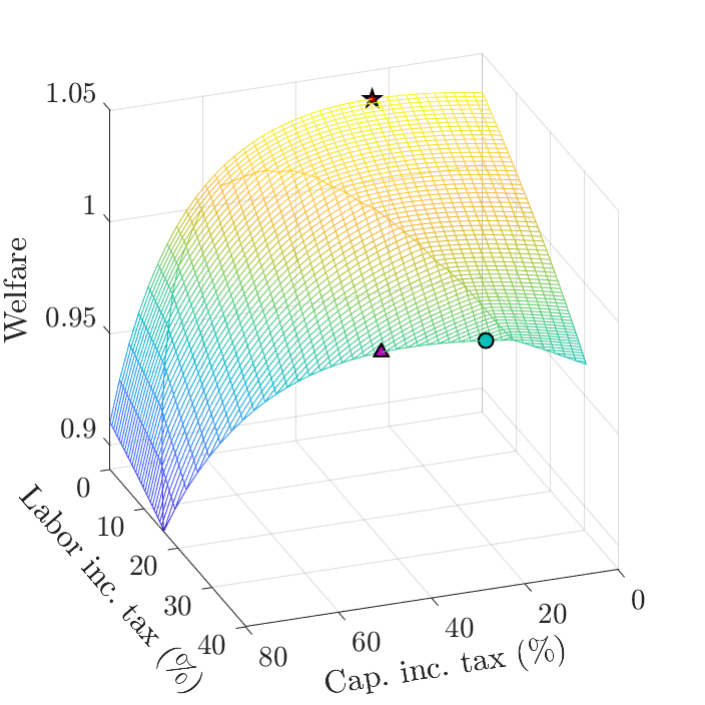}
			\caption{Welfare.}\label{fig:KLWnew}
		\end{subfigure}
		\begin{subfigure}{0.45\linewidth}
			\includegraphics[width=\linewidth]{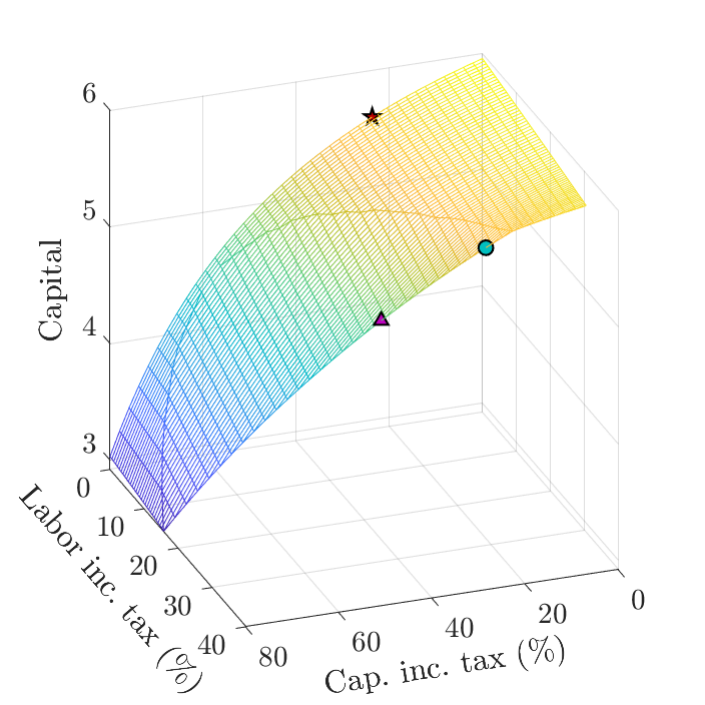}
			\caption{Aggregate capital.}\label{fig:KLAggK}
		\end{subfigure}
		\hspace{1cm}
		\begin{subfigure}{0.45\linewidth}
			\includegraphics[width=\linewidth]{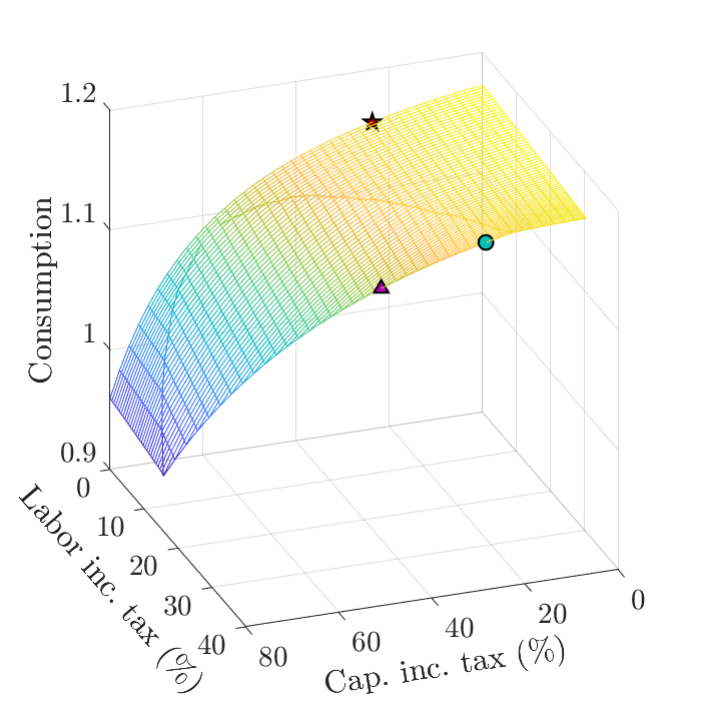}
			\caption{Aggregate consumption.}\label{fig:KLAggC}
		\end{subfigure}
		\begin{subfigure}{0.45\linewidth}
			\includegraphics[width=\linewidth]{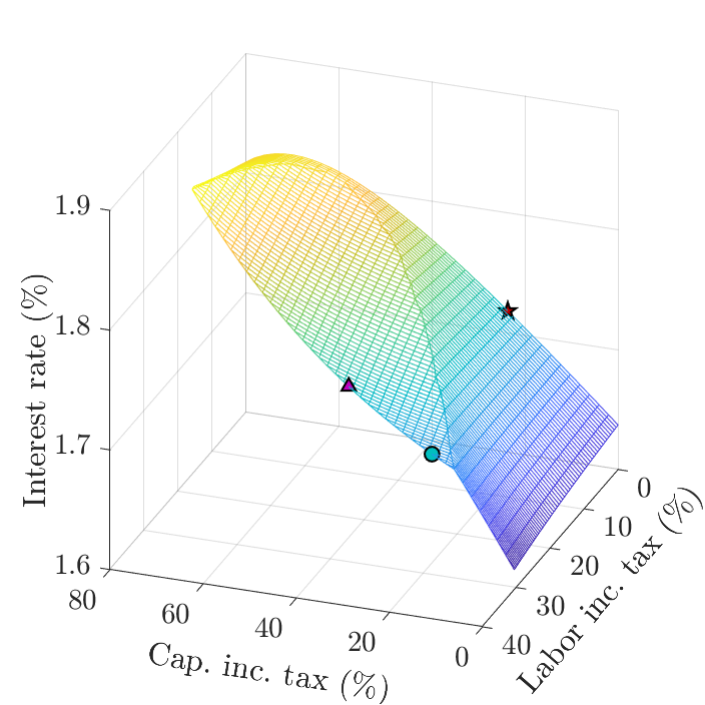}
			\caption{Post-tax interest rate.}\label{fig:KLR}
		\end{subfigure}
		\hspace{1cm}
		\begin{subfigure}{0.45\linewidth}
			\includegraphics[width=\linewidth]{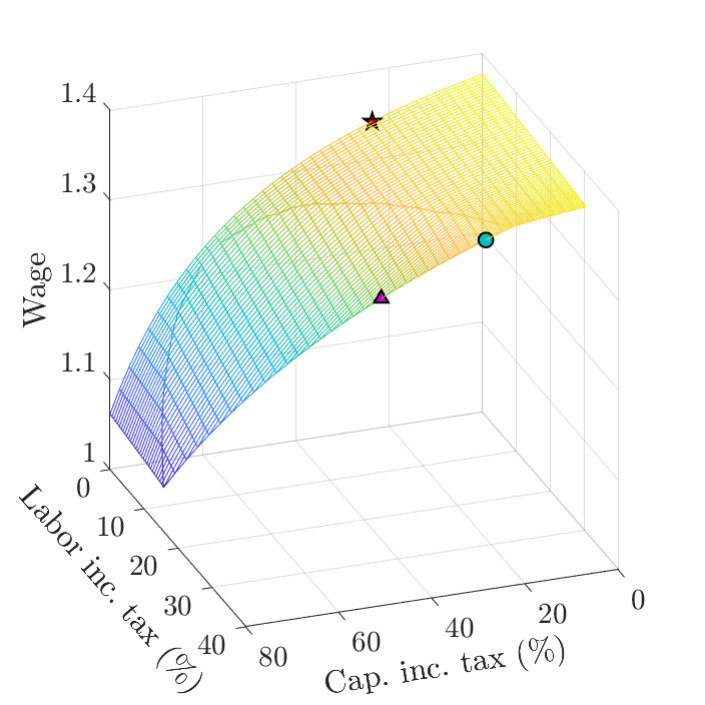}
			\caption{Pre-tax wage.}\label{fig:KLomega}
		\end{subfigure}
		\caption{Effect of varying labor income, capital income and consumption tax rates.}
		\label{fig:KL3d}
	\end{figure}
	
	Figure \ref{fig:KLWnew} displays the welfare that is obtained for a given pair of labor and capital income tax rates, with the consumption tax rate varying as in Figure \ref{fig:KLC} so as to preserve revenue. The star-shaped marker indicates the global maximum of welfare. The circle-shaped marker indicates the welfare maximum with no consumption tax. The triangle-shaped marker indicates the welfare level obtained with the baseline tax rates. The three markers appear in all six panels of Figure \ref{fig:KL3d}. We see that the labor income tax rate is zero at the global welfare optimum. The optimal rates of capital income taxation and consumption taxation are $0.24$ and $0.31$ respectively. Notably, it is not optimal to eliminate or nearly eliminate all capital income taxation in favor of consumption taxation. In this respect our findings differ from those reported in prior studies (using different models) including \cite{Imrohoroglu1998} and \cite{Coleman2000}, as discussed in Section \ref{sec:intro}. On the other hand, consistent with these prior studies, we find that the welfare achieved by only taxing consumption is substantially higher than can be achieved without taxing consumption, and not much less than the global welfare optimum. The global welfare optimum is 0.5\% higher than can be achieved by only taxing consumption, 6.2\% higher than can be achieved by only taxing labor and capital income, and 6.6\% higher than is achieved using the baseline rates of labor and capital income taxation with no consumption tax.
	
	The kink in the curve plotted in Figure \ref{fig:tauK_W} manifests as a curved ridge in the surface plotted in Figure \ref{fig:KLWnew}. A similar ridge is visible in Figures \ref{fig:KLAggK}--\ref{fig:KLomega}. Within the region enclosed by the ridge (where the circle- and triangle-shaped markers are located), the natural borrowing constraint is strictly binding on entrepreneurs. On the other side of the ridge (where the star-shaped marker is located), it is slack. Exactly on the ridge, it is barely binding.
	
	Figures \ref{fig:KLAggK}--\ref{fig:KLomega} show that aggregate capital and consumption and the pre-tax equilibrium wage are decreasing in the capital (labor) income tax rate if the labor (capital) income tax rate is held constant and the consumption tax rate varied to preserve revenue. The shape of the post-tax equilibrium interest rate surface is more complicated. Aggregate consumption at the global welfare optimum is 1\% higher than at the optimum without a consumption tax, and 4.3\% higher than at the baseline rates of labor and capital income taxation with no consumption tax. Aggregate capital at the global welfare optimum is 4.1\% higher than at the optimum without a consumption tax, and 17.1\% higher than at the baseline rates.
	
	It is interesting to observe in Figure \ref{fig:KLAggK} that substituting labor income taxation for consumption taxation while holding the capital income tax rate constant has a slight negative effect on aggregate capital. In our model, consumption taxation is nondistortionary (i.e., does not affect the investment decisions of agents) because all agents spend a fixed proportion of their total wealth on consumption and the consumption tax (Proposition \ref{prop:optrule}) and invest the remainder. If labor income taxation were also nondistortionary then we might expect the substitution of labor income taxation for consumption taxation to have no effect on aggregate capital. In fact, labor income taxation is mildly distortionary in our model, despite labor being inelastically supplied. The mechanism through which distortion takes place is the increase in human wealth (present value of future post-tax wages) brought about by a reduction in the labor income tax rate. While human wealth comprises only a tiny fraction of the total wealth of the wealthiest entrepreneurs, it can comprise a large fraction of the total wealth of less wealthy entrepreneurs. When the labor income tax rate falls, part of the associated increase in the total wealth of entrepreneurs is invested in capital, leading to an increase in aggregate capital.
	
	\begin{figure}[b]
		\centering
		\begin{subfigure}{0.48\linewidth}
			\includegraphics[width=\linewidth]{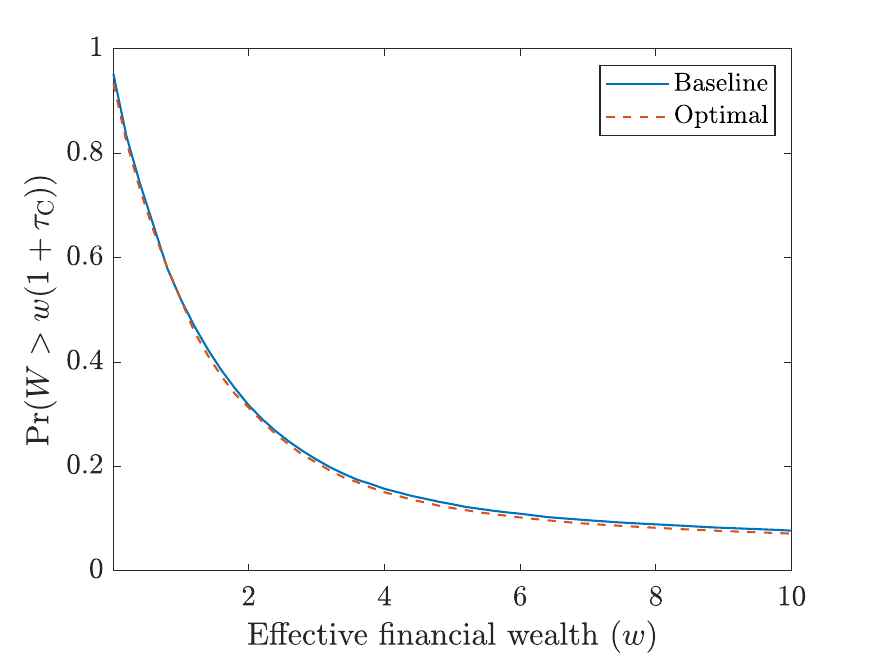}
			\caption{Exceedance probabilities for $w<10$.}\label{fig:W_opt_body}
		\end{subfigure}
		\begin{subfigure}{0.48\linewidth}
			\includegraphics[width=\linewidth]{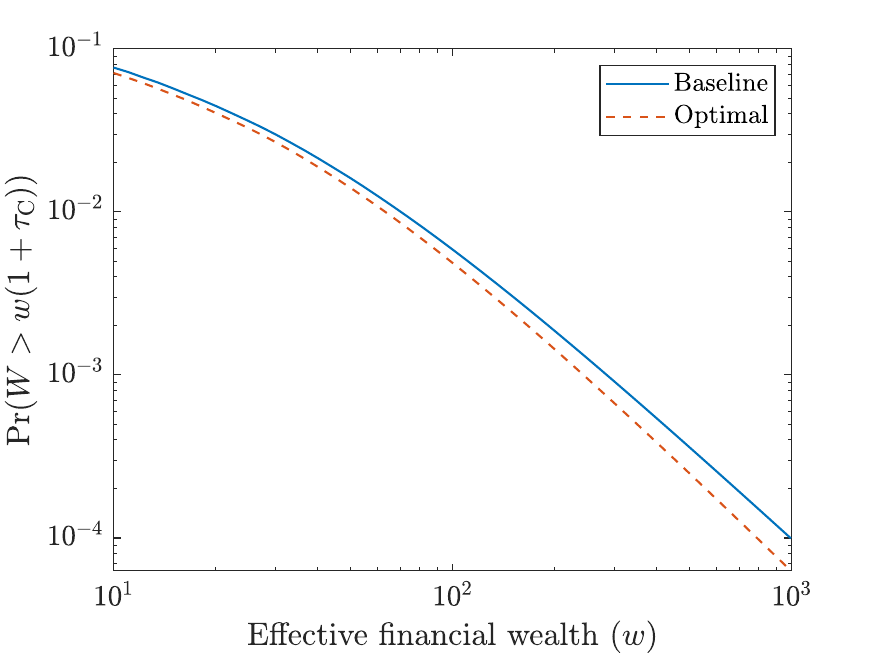}
			\caption{Exceedance probabilities for $w>10$.}\label{fig:W_opt_tail}
		\end{subfigure}
		\caption{Baseline and optimal wealth distributions (consumption tax available).}\label{fig:W_optC}
	\end{figure}
	
	In Figure \ref{fig:W_optC} we display the distributions of effective financial wealth arising under the optimal tax regime (which has no labor income tax) and under the baseline tax rates ($0.25$ for labor income and $0.4$ for capital income, with no consumption tax). We define effective financial wealth to be the value of financial wealth measured in terms of the post-tax price of one unit of consumption. The distribution displayed for the baseline rates is the same as the one in Figure \ref{fig:W_opt_noC}. Figure \ref{fig:W_optC} reveals that the distribution of effective financial wealth under the optimal tax regime differs very little from the distribution under the baseline tax rates for wealth levels less than 10 (about 92\% of agents). The top 8\% of agents are less wealthy under the optimal tax regime, increasingly so toward the top end of the distribution. It appears that the distribution of effective financial wealth under the optimal tax regime is first-order stochastically dominated by the distribution under the baseline tax rates. Welfare is, nevertheless, substantially higher under the optimal tax regime. The bulk of agents choose to maintain about the same level of effective financial wealth under the optimal tax regime as they do under the baseline tax rates, but consume more under the optimal tax regime.
	
	\begin{figure}[!b]
		\centering
		\begin{subfigure}{0.48\linewidth}
			\includegraphics[width=\linewidth]{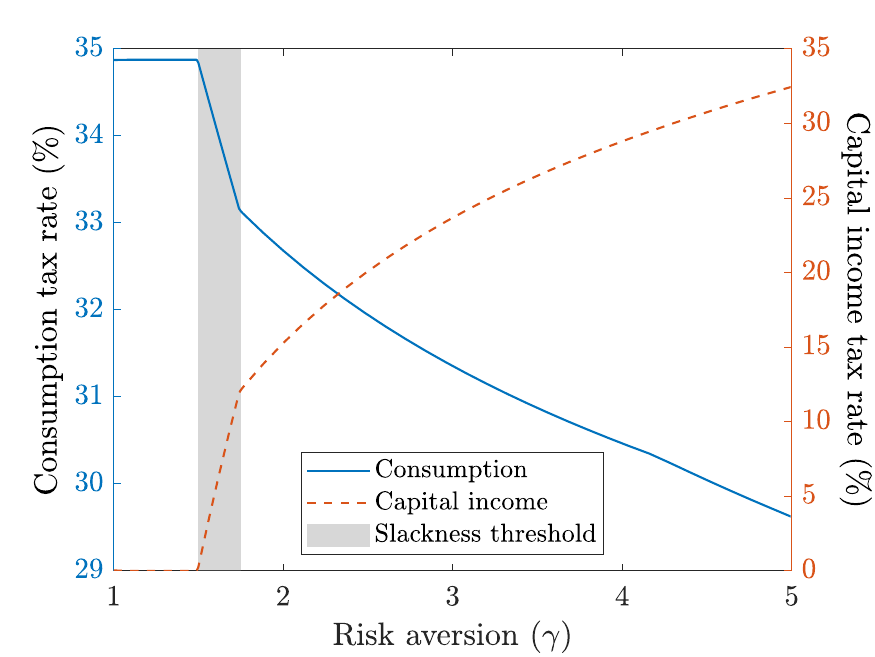}
			\caption{Optimal tax rates.}\label{fig:gamma_tax_CK}
		\end{subfigure}
		\begin{subfigure}{0.48\linewidth}
			\includegraphics[width=\linewidth]{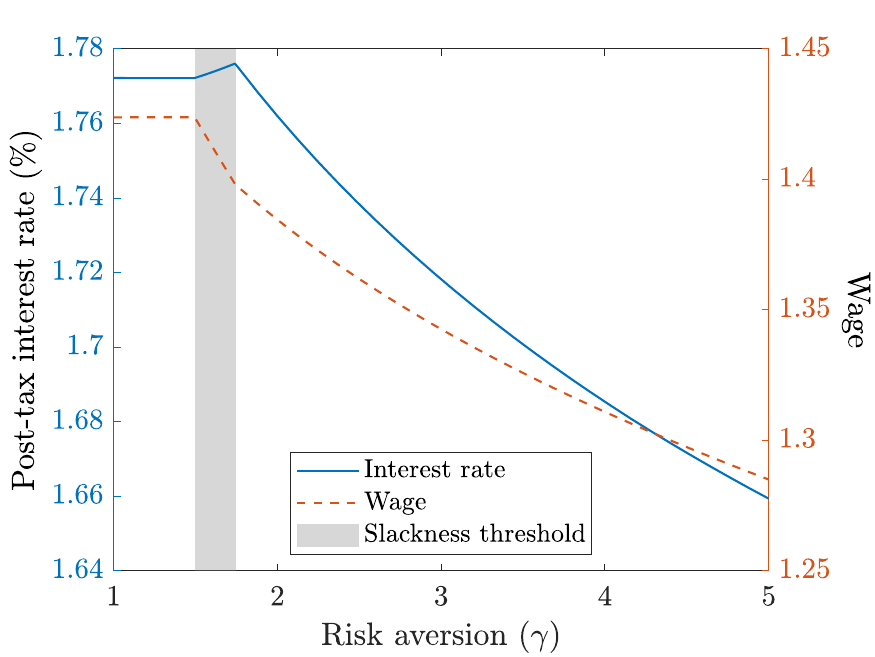}
			\caption{Equilibrium prices.}\label{fig:gamma_price_CK}
		\end{subfigure}
		\caption{Effect of risk aversion on optimal taxes and equilibrium prices.}\label{fig:riskaversion}
	\end{figure}
	
	Figures \ref{fig:riskaversion} and \ref{fig:volatility} show how the optimal tax rates and corresponding equilibrium prices are affected when we vary the risk aversion parameter $\gamma$ or the volatility of productivity parameter $\sigma$. As in Figures \ref{fig:riskaversion_LK} and \ref{fig:volatility_LK}, the parameter spaces for $\gamma$ and $\sigma$ are divided into three regions depending on whether the natural borrowing constraint binds on entrepreneurs. As either parameter increases from a low level, the borrowing constraint is first strictly binding, then barely binding, then slack. The optimal labor income tax rate is always zero, so we display only the optimal capital income and consumption tax rates. The optimal capital income tax rate is zero when the natural borrowing constraint is strictly binding. This occurs for $\gamma<1.5$ (with $\sigma=0.247$) and for $\sigma<0.2$ (with $\gamma=3$). In these cases it is optimal to generate all revenue through the taxation of consumption. The optimal rate of capital income taxation rises steeply as we move through the region in which the borrowing constraint is barely binding. It increases from $0$ to $0.12$ as $\gamma$ increases from $1.5$ to $1.75$ (with $\sigma=0.247$) or from $0$ to $0.18$ as $\sigma$ increases from $0.2$ to $0.215$ (with $\gamma=3$). Further increases to $\gamma$ or $\sigma$ produce less rapid increases in the optimal capital income tax rate. Overall, the optimal rate of capital income taxation varies substantially over a plausible range of parameter values; it may be zero, or perhaps as high as $0.4$. The corresponding range of consumption tax rates is between about $0.27$ and $0.35$.

	\begin{figure}[t!]
		\centering
		\begin{subfigure}{0.48\linewidth}
			\includegraphics[width=\linewidth]{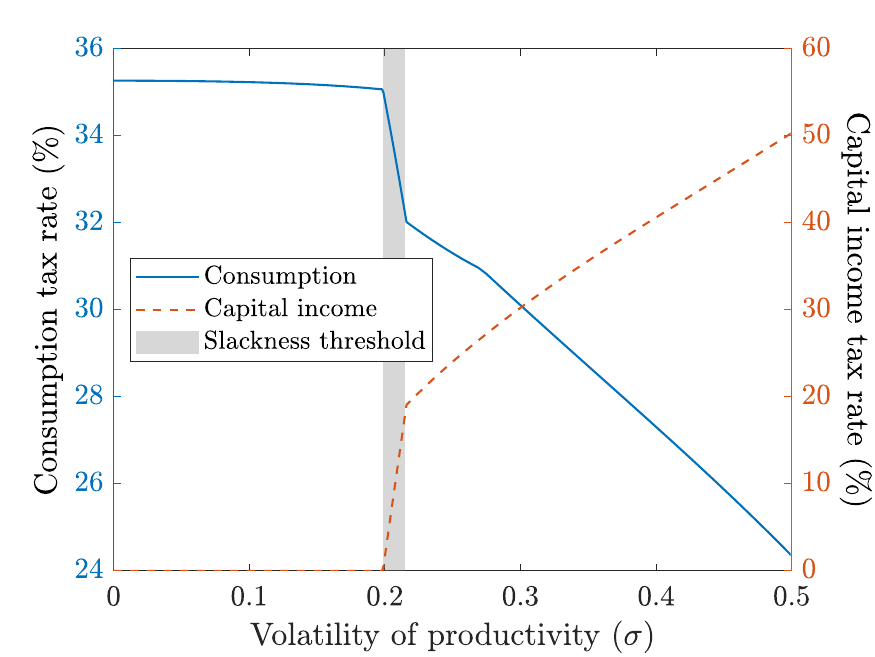}
			\caption{Optimal tax rates.}\label{fig:sigma_tax_CK}
		\end{subfigure}
		\begin{subfigure}{0.48\linewidth}
			\includegraphics[width=\linewidth]{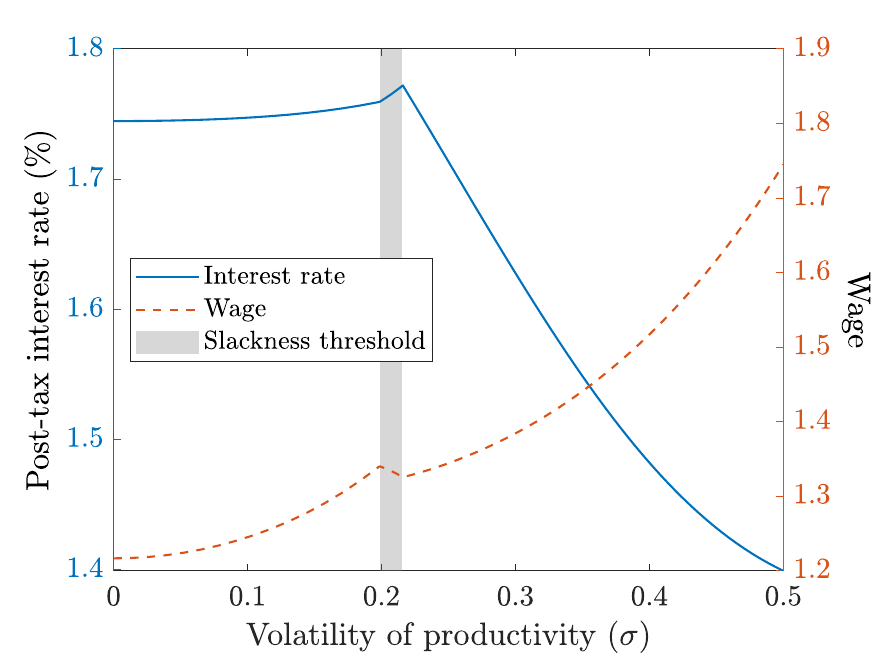}
			\caption{Equilibrium prices.}\label{fig:sigma_price_CK}
		\end{subfigure}
		\caption{Effect of volatility on optimal taxes and equilibrium prices.}\label{fig:volatility}
	\end{figure}

	\section{Transition to the consumption tax equilibrium}\label{sec:transition}
	
	We now turn to an analysis of the path along which our economy transitions to its new stationary equilibrium when the benchmark tax regime with a tax rate of $0.25$ on labor income and $0.4$ on capital income is replaced with the optimal tax regime with a tax rate of $0.24$ on capital income and $0.31$ on consumption. We know from our previous discussion that aggregate capital must rise by 17.1\% and aggregate consumption by 4.3\% along the transition path. However, the increase in aggregate capital cannot come from nowhere; there must necessarily be a period of reduced aggregate consumption during which resources are diverted toward capital accumulation.
	
	Determining the transition path of the economy is more computationally challenging than the analysis undertaken in Sections \ref{sec:numerical} and \ref{sec:opttax}. In stationary equilibrium, there are two equilibrium prices (the interest rate and wage) which are determined by solving two equations (bond and labor market clearing) simultaneously. To identify the path along which an economy transitions to a stationary equilibrium, we need to find a path of price pairs which clears the bond and labor markets in every period. The bond and labor market conditions cannot be solved separately for each period because agents base their decisions not only upon the current prices, but also all future prices, which they correctly anticipate. We describe the procedure used to compute equilibrium price paths at the end of this section.
	
	\begin{figure}
		\centering
		\begin{subfigure}{0.48\linewidth}
			\includegraphics[width=\linewidth]{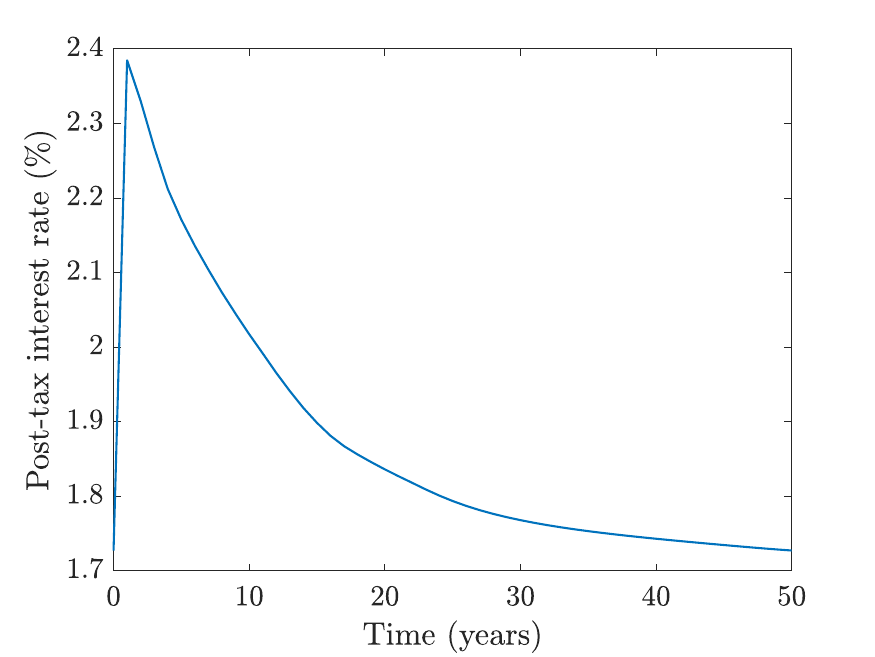}
			\caption{Interest rate.}\label{fig:trans_R}
		\end{subfigure}
		\begin{subfigure}{0.48\linewidth}
			\includegraphics[width=\linewidth]{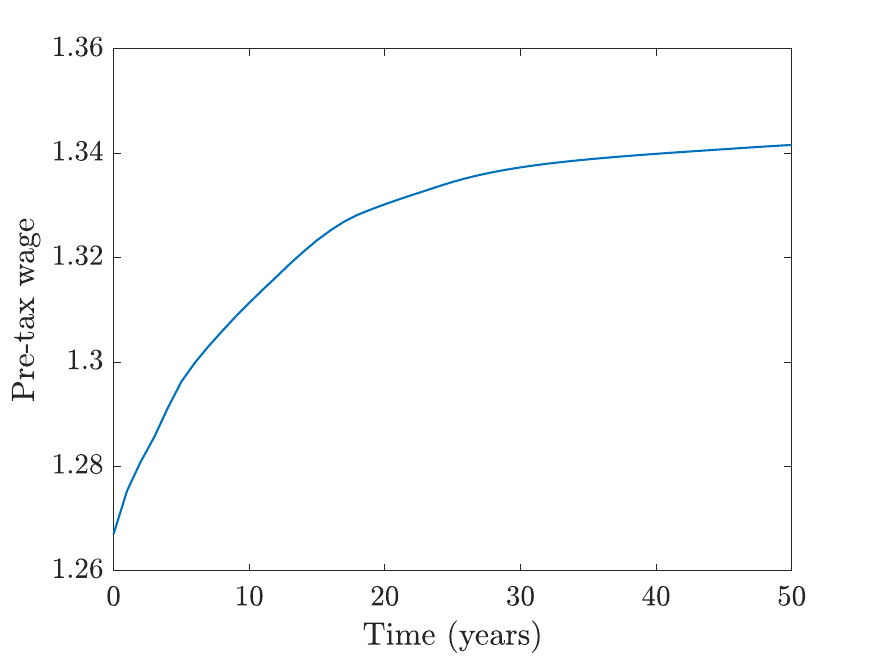}
			\caption{Wage.}\label{fig:trans_omega}
		\end{subfigure}
		\begin{subfigure}{0.48\linewidth}
			\includegraphics[width=\linewidth]{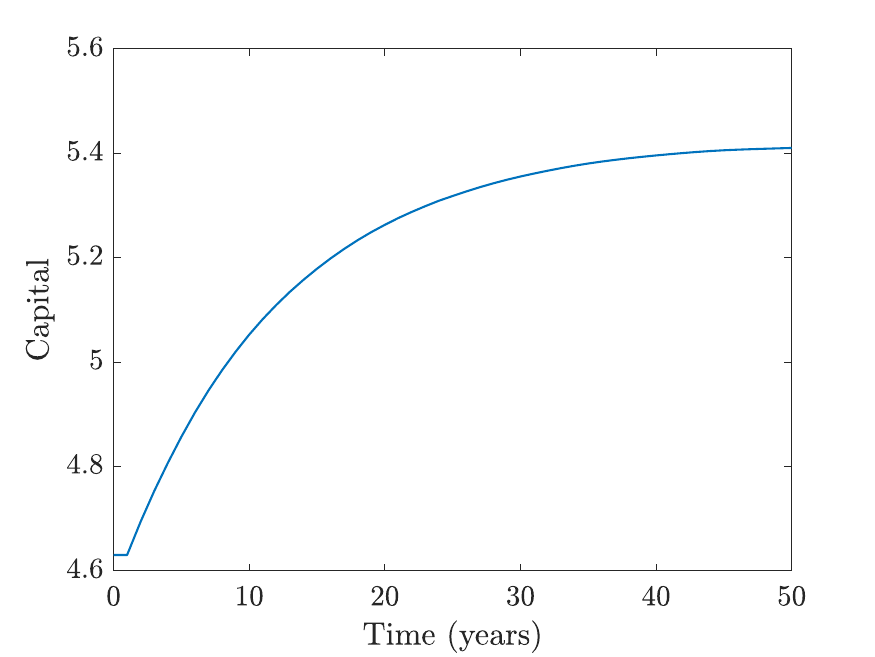}
			\caption{Aggregate capital.}\label{fig:trans_K}
		\end{subfigure}
		\begin{subfigure}{0.48\linewidth}
			\includegraphics[width=\linewidth]{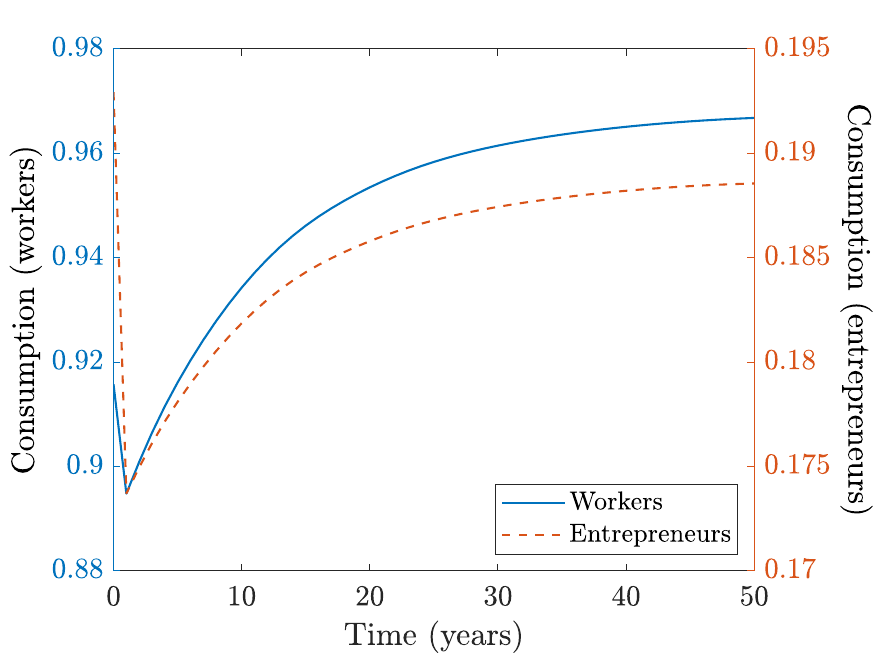}
			\caption{Aggregate consumption.}\label{fig:trans_C}
		\end{subfigure}
		\begin{subfigure}{0.48\linewidth}
			\includegraphics[width=\linewidth]{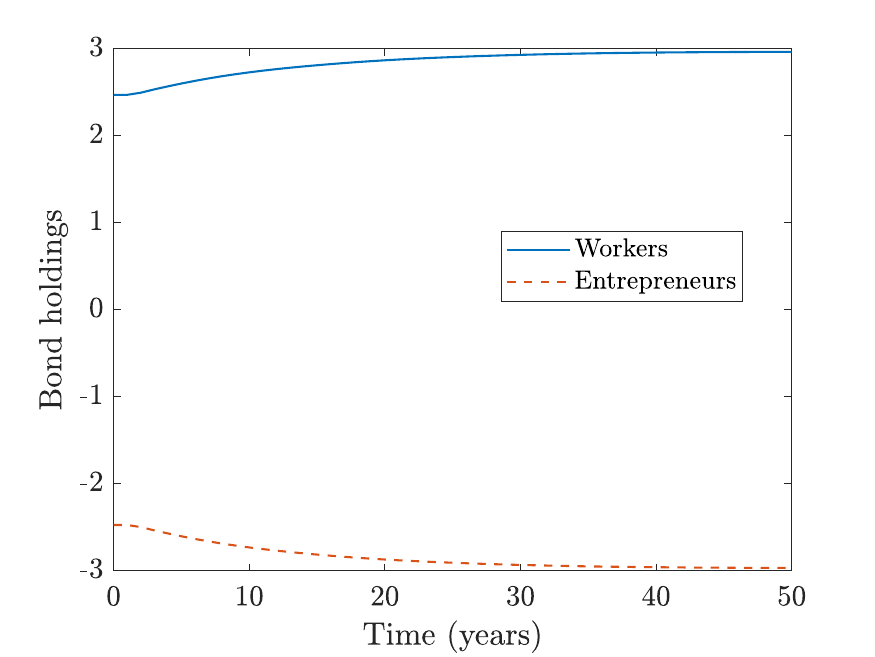}
			\caption{Aggregate bond holdings.}\label{fig:trans_B}
		\end{subfigure}
		\begin{subfigure}{0.48\linewidth}
			\includegraphics[width=\linewidth]{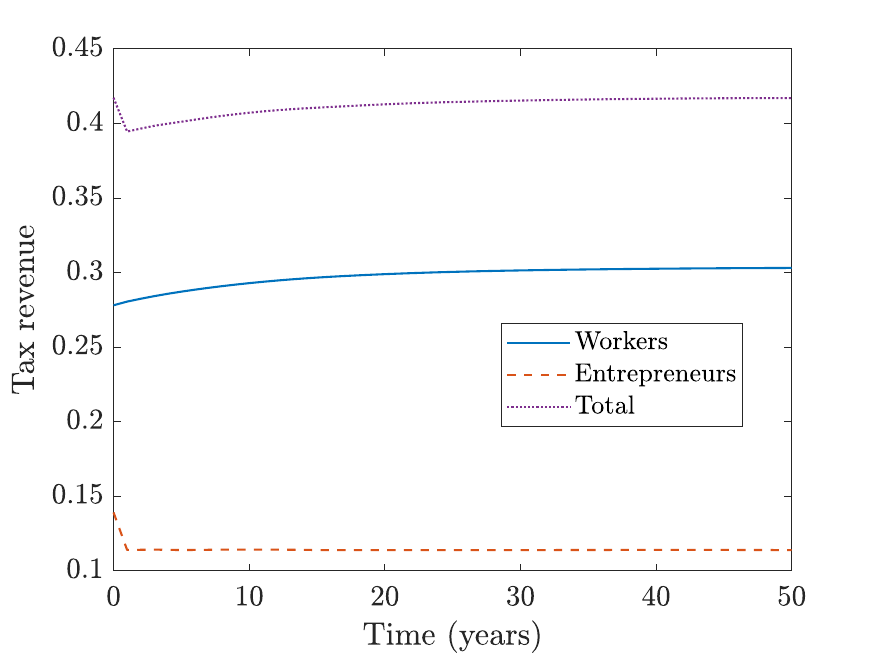}
			\caption{Aggregate tax revenue.}\label{fig:trans_revenue}
		\end{subfigure}
		\caption{Transition to stationary equilibrium with optimal taxes.}\label{fig:opt_trans}
	\end{figure}
	
	Figure \ref{fig:opt_trans} displays the computed transition to the new stationary equilibrium. We see that the transition is largely complete within $50$ years. The temporary period of depressed consumption is apparent in Figure \ref{fig:trans_C}. The change in tax rates in year one causes the aggregate consumption of workers to immediately drop by $2.3\%$, and that of entrepreneurs by $10\%$. Aggregate consumption subsequently rises. It surpasses its initial level within $6$ years for workers and within $10$ years for all agents, but never fully recovers for entrepreneurs. A temporary increase in the equilibrium interest rate accompanies the period of depressed consumption, as shown in Figure \ref{fig:trans_R}. This is the mechanism by which workers are encouraged to reduce their consumption and save more, these resources being redirected to the accumulation of capital. Figure \ref{fig:trans_B} shows the increase in the aggregate bond holdings of workers following the change in tax rates, and the accompanying increase in the aggregate debt of entrepreneurs. 
	
	The transition path of tax revenue is displayed in Figure \ref{fig:trans_revenue}. It is equal in years zero and $100$ by construction, and modestly reduced in the years between. The change in tax rates in year one causes tax revenue to immediately drop by $5.4\%$. This is entirely due to the reduction in the rate of capital income taxation. In subsequent years, consumption tax revenue drawn from workers and entrepreneurs rises, with total tax revenue nearly recovering to its initial level within 25 years. Interestingly, the total tax paid by entrepreneurs is exactly flat from year one onward. During this time, the consumption tax paid by entrepreneurs gradually rises, but the capital income tax paid by entrepreneurs falls by the same amount. The reduction in capital income tax revenue occurs due to the rising cost of labor (Figure \ref{fig:trans_omega}), and despite the gradual increase in aggregate capital (Figure \ref{fig:trans_K}).
	
	Would a majority of the agents in our model vote to change from the baseline tax regime to the optimal tax regime if offered such a choice at the beginning of year one? To answer this question, we computed (by Fourier inversion, using Proposition \ref{prop:stationary}) the proportion of agents for which the year one value function $V^\ast_{J_1}(S_1)$ is increased by changing the tax regime in year one. We find that switching to the optimal tax regime raises the value function for $86\%$ of all agents. A vote on whether to change the tax regime would therefore pass with overwhelming popular support. This support, however, is concentrated among workers, who constitute a large majority ($88.5\%$) of agents. The fraction of workers supporting the change in tax regime is $93\%$, compared to only $26\%$ of entrepreneurs.
	
	The procedure we used to compute the transition paths plotted in Figure \ref{fig:opt_trans} is, by necessity, somewhat complicated. In principle, the transition to stationary equilibrium need not be achieved in finite time, so there are infinitely many equilibrium prices to compute along the transition path. To reduce the problem from an infinite dimensional one to a finite dimensional one, we based our computations on the (false, but approximately correct) assumption that the economy completes its transition to stationary equilibrium within 100 years. Under this assumption, and given arbitrary paths of interest rates and wages for 100 years, backward recursion can be used to compute the corresponding excess demands for bonds and for labor in each year. Our computed equilibrium transition paths of interest rates and wages, consisting of 200 prices in total, are the paths which minimize the sum of squared excess demands. The computation of equilibrium price paths therefore involves minimizing a complicated function of 200 variables.
	
	To achieve this minimization we developed a recursive scheme involving cubic splines. We initially specified the price paths to be cubic splines with knots at years $1,5,10,20,50,100$, and equal to the stationary equilibrium prices in year $100$. This reduces the minimization from $200$ variables to $10$ variables. We found this to be computationally feasible using linear transition paths as starting values. Then we repeated the procedure with knots at one additional year, using the previously computed cubic spline to provide starting values. We continued sequentially adding knots in this way until there were knots for every year up to year $25$, then every five years up to year $50$, then every $10$ years up to year $100$. The resulting computation of equilibrium price paths achieved an absolute excess demand for bonds no greater than $0.0007$ in each year, and an absolute excess demand for labor no greater than $0.0024$ in each year.
	
	\section{Concluding remarks}\label{sec:conclusion}
	
	The analysis presented in this article has obvious implications for tax policy, though the usual caveat about models versus reality applies. Our findings support the complete replacement of labor income taxation with consumption taxation. They leave the door open for capital income to form a significant part of the tax base, but the taxation of capital income should be applied with full offset provisions so that the Domar-Musgrave effect mitigates the disincentive to invest. Full offset provisions may be implemented by direct transfers to entities reporting a loss, or indirectly through loss carry-forward provisions. The latter implementation is already present to varying degrees in different tax jurisdictions.
	
	A skeptical reader may object that the progressivity of consumption taxation relative to labor income taxation is exaggerated in our model due to the assumption that all labor earns the same wage and is taxed at the same rate. This is true, at least insofar as we are concerned with economies in which labor income is taxed at an increasing marginal rate, such as the United States. We counter that fully one third of wealth in the United States is held by the wealthiest $1\%$ of households. \cite{CagettiDeNardi2006} report that business owners and the self employed comprise $81\%$ of these households. No labor income tax, no matter how progressively implemented, can effectively draw revenue from the top third of household wealth. It can at best be designed to target the middle third of wealth rather than the bottom third. Consumption and capital income taxation are effective tools for drawing revenue from the top third of wealth.
	
	The taxation of wealth has been topical in recent years. Academic research exploring the potential benefits of direct wealth taxation includes \cite{GuvenenKambourovKuruscuOcampoChen2023} and \cite{BoarMidrigan2023}. These articles focus on the direct taxation of wealth as an alternative to the taxation of capital income, reaching opposite conclusions about which should be preferred. In our model, it is possible to view the consumption tax as an implicit tax on wealth. The reason is that the optimal consumption rule for agents (Proposition \ref{prop:optrule}) has them consuming fraction $(1-\beta)/(1+\tau_\mathrm{C})$ of their total wealth each period. With $\beta=0.96$ and $\tau_\mathrm{C}=0.31$, the total consumption tax paid by an agent each period is equal to roughly $1\%$ of their total wealth. Total wealth includes not only financial wealth but also the present value of all future labor income, so the amount of consumption tax paid is significant even for agents with no financial wealth. A tax levied only on financial wealth -- which is what is generally meant by a tax on wealth -- ought to be both more progressive and more distortionary than a revenue-equivalent tax levied on total wealth. Future research may explore the possibility of directly taxing financial wealth within the context of our model.
	
	A rate of consumption taxation in excess of 30\%, as we have proposed here, may seem incredible to readers habituated to taxation policy in the United States. The United States is the only major advanced economy without a centrally administered consumption tax. Most component states levy a small consumption tax, with California applying the highest rate of $7.25\%$. Beyond the United States, many nations levy a consumption tax of $20\%$ or more, usually administered as a value-added tax. Such nations include a large majority of European Union members, as well as Brazil, Norway, Russia, Turkey and the United Kingdom, among others. It is our opinion that welfare in the United States would be greatly improved by introducing a substantial nationwide value-added tax on consumption while simultaneously phasing out the taxation of labor income.
	
	\appendix

	\section{Proofs}\label{sec:proofwealth}
	
	Here we provide proofs of the numbered mathematical statements in Section \ref{sec:model}.
	
	\begin{proof}[Proof of Lemma \ref{lem:optimal}]
		When $V_n(s)=a_ns$ the logarithm of the maximand in \eqref{eq:vna} simplifies to
		\begin{align}\label{eq:logmax}
			(1-\beta)\log c+\beta\log(s-(1+\tau_\mathrm{C})c)+\beta\log\nu_\gamma^{-1}(g_n(\theta;a)).
		\end{align}
		We thus maximize the maximand by setting $\theta=\theta_n(a)$ and, by elementary calculus, $c=[(1-\beta)/(1+\tau_\mathrm{C})]s$. Substituting these values of $\theta$ and $c$ into \eqref{eq:logmax} gives
		\begin{align*}
			(1-\beta)\log\left(\frac{1-\beta}{1+\tau_\mathrm{C}}s\right)+\beta\log(\beta s)+\beta\log\kappa_n(a).
		\end{align*}
		Taking the exponential and simplifying yields the maximum value claimed.
	\end{proof}
	
	\begin{proof}[Proof of Proposition \ref{prop:optrule}]
		We first establish the existence of a unique solution $a=a^\ast$ to \eqref{eq:nlsystem}. Define $x_n=\log a_n$, $x=(x_1,\dots,x_N)$, and the map $T:\R^N\to \R^N$ by
		\begin{equation}\label{eq:Tdef}
			T_nx=(1-\beta)\log\frac{1-\beta}{1+\tau_\mathrm{C}}+\beta\log\beta+\beta \log\kappa_n(\exp(x)),
		\end{equation}
		where $\exp$ applies to vectors entry-wise. Since $\kappa_n(a)$ is monotone in $a$, $T$ is also monotone. Furthermore, due to the fact that $\kappa_n(a)$ is positive homogeneous of degree one in $a$, for any constant $k\ge 0$ we have $T_n(x+k1_N)=T_nx+\beta k$. Blackwell's sufficient condition \citep[see e.g.][Thm.~3.3]{StokeyLucas1989} thus implies that $T$ is a contraction with modulus $\beta<1$, and so the contraction mapping theorem implies the existence of a unique fixed point $x^\ast$ of $T$. The existence of a unique solution $a=a^\ast$ to \eqref{eq:nlsystem} follows. Moreover, Lemma \ref{lem:optimal} implies that the value function $V^\ast_n(s)=a_n^\ast s$ solves the Bellman equation \eqref{eq:vna}.
		
		It remains to verify the decision rules. The consumption rule \eqref{eq:crule} is immediate from Lemma \ref{lem:optimal}. The capital rule \eqref{eq:krule} follows from \eqref{eq:frack} and \eqref{eq:crule}. The labor rule \eqref{eq:lrule} follows from \eqref{eq:krule} and the discussion surrounding \eqref{eq:elln} and \eqref{eq:maxW}. The bond rule \eqref{eq:brule} follows from \eqref{eq:fracb}, \eqref{eq:crule}, and \eqref{eq:humanwealth}.
	\end{proof}
	
	Lemma \ref{lem:optimal} and Proposition \ref{prop:optrule} together establish that the solution $V^\ast_n(s)=a_n^\ast s$ to the Bellman equation is unique in the class of candidate value functions of the form $V_n(s)=a_ns$. The solution is in fact unique in the broader class
	\begin{equation*}
		\cV\coloneqq\left\{V\in\mathcal B((0,\infty)\times \cN):0<\inf_{s,n}\frac{V_n(s)}{s}\leq\sup_{s,n}\frac{V_n(s)}{s}<\infty\right\},
	\end{equation*}
	where $\mathcal B((0,\infty)\times \cN)$ is the collection of real Borel functions on $(0,\infty)\times \cN$.
	\begin{prop}\label{pro:uniqueV}
		The value function $V^\ast_n(s)=a_n^\ast s$ is the unique solution to the Bellman equation \eqref{eq:vna} in $\cV$.
	\end{prop}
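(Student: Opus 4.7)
The plan is to show uniqueness by a sandwich argument that leverages the contraction property already established for the map $T$ in the proof of Proposition \ref{prop:optrule}. Let $\mathcal T$ denote the Bellman operator sending $V$ to the right-hand side of \eqref{eq:vna}. First I would verify two structural facts about $\mathcal T$: (a) monotonicity, i.e.\ $V\le V'$ pointwise implies $\mathcal TV\le \mathcal TV'$, which follows because $\mu_n(\cdot)$ is monotone (as $\nu_\gamma$ is strictly monotone) and the logarithm and exponential preserve order; and (b) the fact that when $V$ is linear, $V_n(s)=a_n s$, Lemma \ref{lem:optimal} gives $(\mathcal TV)_n(s)=\exp(T_nx)\cdot s$ with $x=\log a$ and $T$ defined in \eqref{eq:Tdef}. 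In particular, $\mathcal T$ preserves linearity and its action on linear candidates coincides (in log scale) with the contraction $T$ on $\R^N$.

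Next I would sandwich an arbitrary $V\in\cV$ between two linear candidates. Setting $\underline a_n=\inf_s V_n(s)/s$ and $\overline a_n=\sup_s V_n(s)/s$, the definition of $\cV$ gives $0<\underline a_n\le \overline a_n<\infty$ and $\underline a_n s\le V_n(s)\le \overline a_n s$ for all $s>0$. Applying monotonicity of $\mathcal T$ together with the linearity-preservation from Lemma \ref{lem:optimal},
\begin{equation*}
\exp(T_n\log\underline a)\cdot s \;\le\; (\mathcal TV)_n(s) \;\le\; \exp(T_n\log\overline a)\cdot s,
\end{equation*}
so $\mathcal TV\in\cV$ and, in particular, the iterates $V^{(k)}\coloneqq\mathcal T^k V$ are well-defined and lie in $\cV$. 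Writing $\underline a^{(k)},\overline a^{(k)}$ for the associated inf/sup ratios, induction gives $T^k\log\underline a\le \log\underline a^{(k)}\le \log\overline a^{(k)}\le T^k\log\overline a$.

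Since $T$ is a contraction on $\R^N$ with modulus $\beta<1$ (as established in the proof of Proposition \ref{prop:optrule}) and its unique fixed point is $x^*=\log a^*$, both outer bounds $T^k\log\underline a$ and $T^k\log\overline a$ converge to $x^*$. Hence $\log\underline a^{(k)}$ and $\log\overline a^{(k)}$ both converge to $x^*$, which means that $\sup_n\sup_s|V^{(k)}_n(s)/s-a_n^*|\to 0$. Now if $V\in\cV$ is any solution of the Bellman equation, then $V^{(k)}=V$ for every $k$, so letting $k\to\infty$ yields $V_n(s)=a_n^* s=V_n^*(s)$, which is the claimed uniqueness.

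The main obstacle I anticipate is verifying that $\mathcal T$ actually maps $\cV$ into itself at the level of Borel measurability, since the maximization in \eqref{eq:vna} must produce a Borel function of $s$ and $n$; this requires appealing to a measurable maximum theorem, using the continuity of the integrand in $(c,\theta)$ and the compactness of the feasible set $[0,s/(1+\tau_\mathrm{C})]\times[0,1]$. Once this and the monotonicity of $\mathcal T$ are in place, the sandwich argument is routine given the contraction of $T$ that was already proved.
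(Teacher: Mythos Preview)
Your proposal is correct and follows essentially the same sandwich argument as the paper: bound an arbitrary $V\in\cV$ between linear candidates, use monotonicity of the Bellman operator together with Lemma \ref{lem:optimal} to propagate the bounds under iteration, and then invoke the contraction of $T$ on $\R^N$ to squeeze both bounds to $a^\ast$. The paper's proof is marginally more streamlined in that it applies the iterated operator directly to the fixed point $V$ rather than tracking $\underline a^{(k)},\overline a^{(k)}$ for general iterates, but this is only a cosmetic difference; your added remark on Borel measurability of $\mathcal T V$ is a legitimate technical point that the paper leaves implicit.
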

	\begin{proof}[Proof of Proposition \ref{pro:uniqueV}]
		In view of the definition of $\cV$, for any $V\in\cV$ we may choose $N\times1$ vectors $\ubar{a},\bar{a}$ with positive entries such that $\ubar{a}_ns\leq V_n(s)\leq\bar{a}_ns$ for all $s\in(0,\infty)$ and $n\in\cN$. Define the map $\tilde{T}:\cV\to\mathcal B((0,\infty)\times\cN)$ by setting $(\tilde{T}V)_n(s)$ equal to the right-hand side of the Bellman equation \eqref{eq:vna}. Using the monotonicity of $\tilde{T}$ and applying Lemma \ref{lem:optimal}, we obtain
		\begin{equation}\label{eq:TVbounds}
			\left(\frac{1-\beta}{1+\tau_\mathrm{C}}\right)^{1-\beta}(\beta\kappa_n(\ubar{a}))^\beta s \le (\tilde{T}V)_n(s)\le \left(\frac{1-\beta}{1+\tau_\mathrm{C}}\right)^{1-\beta}(\beta\kappa_n(\bar{a}))^\beta s.
		\end{equation}
		Dividing by $s>0$ and taking the infimum and supremum over $s$ and $n$, we see that $\tilde{T}V\in \cV$. We may thus restrict the codomain of $\tilde{T}$ to $\cV$ and write $\tilde{T}:\cV\to\cV$.
		
		A candidate value function in $\cV$ solves the Bellman equation \eqref{eq:vna} if and only if it is a fixed point of $\tilde{T}$. Proposition \ref{prop:optrule} establishes that one such solution is given by $V^\ast_n(s)=a^\ast_ns$. It remains to show uniqueness. Let $V\in \cV$ be any fixed point of $\tilde{T}$, and rewrite \eqref{eq:TVbounds} as
		\begin{equation}\label{eq:TVbounds2}
			\exp(T_n(\log\ubar{a}))s\leq V_n(s)\leq\exp(T_n(\log\bar{a}))s,
		\end{equation}
		where $T_n$ is as defined in \eqref{eq:Tdef}, and $\log$ applies to vectors entry-wise. Lemma \ref{lem:optimal} reveals that, for $k\in\N$, applying $\tilde{T}^k$ to $\exp(T_n(\log a))s$ (viewed as a function of $s$ and $n$) gives $\exp(T^{k+1}_n(\log a))s$. Therefore, applying $\tilde{T}^k$ to \eqref{eq:TVbounds2}, we obtain
		\begin{equation*}
			\exp((T^{k+1})_n(\log \ubar{a}))s\le V_n(s)\le \exp((T^{k+1})_n(\log \bar{a}))s.
		\end{equation*}
		Letting $k\to\infty$ and noting that $T$ is a contraction with fixed point $x^\ast=\log a^\ast$ (as established in the proof of Proposition \ref{prop:optrule}), we have $T^{k+1}(\log a)\to\log a^*$ for any $a$, and so $V_n(s)=a_n^*s=V_n^*(s)$.
	\end{proof}
	\begin{proof}[Proof of Proposition \ref{prop:stationary}]
		To account for the fact that a Markov multiplicative process with reset as defined in \citet{BeareToda2022} resets to one, as opposed to $S_t$ resetting to $h$ in our model, we define the scaled wealth process $\tilde{S}_t\coloneqq h^{-1}S_t$, which resets to one. The sequence of pairs $(\tilde{S}_t,J_t)_{t\in\mathbb Z_+}$ is then a Markov multiplicative process with reset, and has a unique stationary distribution by Proposition 3 in \citet{BeareToda2022}. The pair $(h^{-1}S,J)$ is a random draw from this stationary distribution. For all complex $z$ with real part belonging to $\mathcal I_-$, Lemma 2 in \citet{BeareToda2022} implies that $\mathrm{I}-\A(z)$ is invertible, that
		\begin{align*}
			\E((h^{-1}S)^z\mid J=n)=(1-\upsilon)p_n^{-1}\varpi^\top(\mathrm{I}-\A(z))^{-1}e^{(n)}
		\end{align*}
		for each $n\in\mathcal N$ with $p_n>0$, and that
		\begin{align*}
			\E((h^{-1}S)^z)=(1-\upsilon)\varpi^\top(\mathrm{I}-\A(z))^{-1}1_N.		
		\end{align*}
		Equations \eqref{eq:conditionalwealthMellin} and \eqref{eq:wealthMellin} follow immediately. If the equation $\rho(\A(z))=1$ admits a unique positive solution $z=\alpha$, then Theorem 1 in \citet{BeareToda2022} implies that
		\begin{align}\label{eq:hPareto}
			\lim_{s\to\infty}\frac{\log\mathrm{P}(h^{-1}S>s)}{\log s}&=-\alpha.
		\end{align}
		Noting that
		\begin{align*}
			\lim_{s\to\infty}\frac{\log\mathrm{P}(S>s)}{\log s}&=\lim_{s\to\infty}\frac{\log\mathrm{P}(S>hs)}{\log hs}\\&=\left(\lim_{s\to\infty}\frac{\log\mathrm{P}(h^{-1}S>s)}{\log s}\right)\left(\lim_{s\to\infty}\frac{\log hs-\log h}{\log hs}\right),
		\end{align*}
		we deduce that \eqref{eq:hPareto} implies \eqref{eq:Pareto}.
	\end{proof}
	\begin{proof}[Proof of Proposition \ref{prop:equilibrium}]
		If $\rho(\A(1))<1$ then Proposition \ref{prop:stationary} establishes that, for each $n\in\mathcal N$ with $p_n>0$,
		\begin{align*}
			\E(S\mid J=n)&=(1-\upsilon)p_n^{-1}h\varpi^\top(\mathrm{I}-\A(1))^{-1}e^{(n)}.
		\end{align*}
		We thus deduce from \eqref{eq:krule}, \eqref{eq:lrule} and \eqref{eq:brule} in Proposition \ref{prop:optrule} that
		\begin{align*}
			\E(B_J^*(S)\mid J=n)&=-\frac{h}{R}+\frac{\beta}{\upsilon}(1-\theta_n^*)(1-\upsilon)p_n^{-1}h\varpi^\top(\mathrm{I}-\A(1))^{-1}e^{(n)},\\
			\E(L_J^*(S)\mid J=n)&=\frac{\beta}{\upsilon}\theta_n^*\ell_n(\omega)(1-\upsilon)p_n^{-1}h\varpi^\top(\mathrm{I}-\A(1))^{-1}e^{(n)}.
		\end{align*}
		Multiplying both equations by $p_n$ and summing over $n$ yields \eqref{eq:aggB} and \eqref{eq:aggL}.
		
		Proposition 1 in \citet{BeareToda2022} implies that $\rho(\A(z))$ is a convex function of real $z$ satisfying $\rho(\A(0))=\upsilon<1$, so if $\rho(A(1))\geq1$ then the equation $\rho(\A(z))=1$ must admit a unique positive solution $z=\alpha$ with $\alpha\leq 1$. It then follows from Proposition \ref{prop:stationary} that the right tail of the stationary distribution of wealth is Pareto with decay rate $\alpha\leq1$, implying that $\E(S)=\infty$.
	\end{proof}
	\begin{proof}[Proof of Proposition \ref{prop:welfare}]
		For each real $z$ we have
		\begin{align}
			\E(V_J^\ast(h)^z)&=\sum_{n=1}^N\varpi_n(a_n^\ast h)^z =h^z\varpi^\top(a^\ast)^z.\label{eq:welfareproof}
		\end{align}
		If $\gamma\neq1$ then we obtain \eqref{eq:Wnew} from \eqref{eq:welfareproof} by setting $z=1-\gamma$ and raising to the power of $1/(1-\gamma)$. If $\gamma=1$ then we observe that \eqref{eq:welfareproof} provides a formula for the moment generating function of $\log V^\ast_J(h)$. The derivative of this function is
		\begin{align*}
			\frac{\mathrm{d}}{\mathrm{d}z}\E(V^\ast_J(h)^z)&=(\log h)h^z\varpi^\top(a^\ast)^z+h^z\varpi^\top((a^\ast)^z\odot\log a^\ast).
		\end{align*}
		Therefore,
		\begin{align*}
			\E(\log V_J^\ast(h))&=\left.\frac{\mathrm{d}}{\mathrm{d}z}\E(V^\ast_J(h)^z)\right|_{z=0}=\log h+\varpi^\top\log a^\ast.
		\end{align*}
		Taking the exponential yields \eqref{eq:Wnew} for the case $\gamma=1$.
	\end{proof}
	\begin{proof}[Proof of Proposition \ref{prop:tax}]
		The formula for $\E(T_\mathrm{C}(S))$ follows immediately from the linearity of $T_\mathrm{C}(s)$ in $s$ and the formula for $\E(S)$ obtained by setting $z=1$ in Proposition \ref{prop:stationary}. To obtain the formula for $\upsilon\E(T_\mathrm{K}(S,J))$ we first observe that, for each $n\in\cN$ with $p_n>0$,
		\begin{multline*}
			\upsilon\E(T_\mathrm{K}(S,J)\mid J=n)=\frac{\tau_\mathrm{K}}{1-\tau_\mathrm{K}}\beta\theta^\ast_n\E(S\mid J=n)\sum_{n'=1}^N\pi_{nn'}r_{n'}(\omega)\\=\frac{\tau_\mathrm{K}}{1-\tau_\mathrm{K}}\beta (1-\upsilon)hp_n^{-1}\sum_{n'=1}^N(\varpi^\top(\mathrm{I}-\A(z))^{-1}e^{(n)})\theta^\ast_n\pi_{nn'}r_{n'}(\omega),
		\end{multline*}
		where the second equality follows from the formula for $\E(S\mid J=n)$ obtained by setting $z=1$ in Proposition \ref{prop:stationary}. It now follows from the law of iterated expectations that
		\begin{align*}
			\upsilon\E(T_\mathrm{K}(S,J))&=\frac{\tau_\mathrm{K}}{1-\tau_\mathrm{K}}\beta (1-\upsilon)h\sum_{n=1}^N\sum_{n'=1}^N(\varpi^\top(\mathrm{I}-\A(z))^{-1}e^{(n)})\theta^\ast_n\pi_{nn'}r_{n'}(\omega)\\
			&=\frac{\tau_\mathrm{K}}{1-\tau_\mathrm{K}}\beta(1-\upsilon)h(\varpi^\top(\mathrm{I}-\A(z))^{-1}\odot\theta^{\ast\top})\Pi r\\&=\frac{\tau_\mathrm{K}}{1-\tau_\mathrm{K}}\beta(1-\upsilon)h\varpi^\top(\mathrm{I}-\A(z))^{-1}(\Pi r\odot\theta^\ast),
		\end{align*}
		as claimed.
	\end{proof}

	
\end{document}